\newtheorem{problem}{\textbf{Problem}}
\newtheorem{mydefinition}{\textbf{Definition}}
\newtheorem{mytheorem}{\textbf{Theorem}}
\newtheorem{mylemma}{\textbf{Lemma}}
\begin{document}

\title{\Huge
A High-Performance Triple Patterning Layout Decomposer\\ with Balanced Density
}

\iftrue
\author{
\IEEEauthorblockN{
Bei Yu\IEEEauthorrefmark{0},\ \ 
Yen-Hung Lin\IEEEauthorrefmark{2},\ \ 
Gerard Luk-Pat\IEEEauthorrefmark{3},\ \ 
Duo Ding\IEEEauthorrefmark{4},\ \ 
Kevin Lucas\IEEEauthorrefmark{3},\ \ 
David Z. Pan\IEEEauthorrefmark{0}}\\
\IEEEauthorblockA{\IEEEauthorrefmark{0} ECE Dept., University of Texas at Austin, Austin, USA \ \ \ 
\IEEEauthorrefmark{3} Synopsys Inc., Austin, USA}\\
\IEEEauthorblockA{\IEEEauthorrefmark{2} CS Dept., National Chiao Tung University, Taiwan \ \ \ \ \ \ \ 
\IEEEauthorrefmark{4} Oracle Corp., Austin, USA}
\vspace{-.1in}
}
\fi

\maketitle
\thispagestyle{empty} 

\begin{abstract}
Triple patterning lithography (TPL) has received more and more attentions from industry as one of the leading candidate for 14nm/11nm nodes.
In this paper, we propose a high performance layout decomposer for TPL.
Density balancing is seamlessly integrated into all key steps in our TPL layout decomposition, including
density-balanced semi-definite programming (SDP),  density-based mapping, and density-balanced graph simplification.
Our new TPL decomposer can obtain high performance even compared to previous state-of-the-art layout decomposers which are not balanced-density aware,
e.g., by Yu et al. (ICCAD'11), Fang et al. (DAC'12), and Kuang et al. (DAC'13).
Furthermore, the balanced-density version of our decomposer can provide more balanced density which leads to less edge placement error (EPE), while the conflict and stitch numbers are still very comparable to our non-balanced-density baseline.
\end{abstract}

\vspace{-.1in}
\section{Introduction}

As the minimum feature size further decreases, the semiconductor industry faces great challenge in patterning sub-22nm half-pitch due to the delay of viable next generation lithography,
such as extreme ultra violet (EUV) 
and electric beam lithography (EBL). 
Triple patterning lithography (TPL), along with self-aligned double patterning (SADP), are solution candidates for the 14nm logic node \cite{ITRS}.
Both TPL and SADP are similar to double patterning lithography (DPL), but with different or more exposure/etching processes \cite{LITH_ICCAD2012_Yu}.
SADP may be significantly restrictive on design, i.e., cannot handle irregular arrangements of contacts and does not allow stitching.
Therefore, TPL began to receive more attention from industry, especially for metal 1 layer patterns.
For example, industry has already explored test-chip patterns with triple patterning and even quadruple patterning \cite{2009Intel}.

Similar to DPL, the key challenge of TPL lies in the decomposition process where the original layout is divided into three masks.
During decomposition, when the distance between any two features is less than the minimum coloring distance $dis_m$,
they need to be assigned into different masks to avoid a conflict.
Sometimes, a conflict can be resolved by splitting a pattern into two touching parts, called \textit{stitches}.
After the TPL layout decomposition, the features are assigned into three masks (colors) to remove all conflicts.
The advantage of TPL is that the effective pitch can be tripled which can further improve lithography resolution.
Besides, some native conflicts in DPL can be resolved.

In layout decomposition, especially for TPL, density balance should also be considered, along with the conflict and stitch minimization.
A good pattern density balance is also expected to be a consideration in mask CD and registration control \cite{TPL_SPIE2012_Lucas},
while unbalanced density would cause lithography hotspots as well as lowered CD uniformity due to irregular pitches \cite{DPL_ASPDAC2010_Yang}.
However, from the algorithmic perspective, achieving a balanced density in TPL could be harder than that in DPL.
(1) In DPL, two colors can be more implicitly balanced; while in TPL, often times existing/previous strategies may try to do DPL first, and then do some ``patch" with the third mask, which causes a big challenge to ``explicitly" consider the density balance.
(2) Due to the one more color, the solution space is much larger \cite{TPL_SPIE08_Cork}.
(3) Instead of global density balance, local density balance should be considered to reduce the potential hotspots, since neighboring patterns are one of the main sources of hotspots.
As shown in Fig. \ref{fig:balance} (a)(b), when only global density balance is considered, feature $a$ is assigned white color.
Since two black features are close to each other, hotspot may be introduced.
To consider the local density balance, the layout is partitioned into four bins \{$b_1, b_2, b_3, b_4$\} (see Fig. \ref{fig:balance} (c)).
Feature $a$ is covered by bins $b_1$ and $b_2$, therefore it is colored as blue to maintain the local density balances for both bins (see Fig. \ref{fig:balance} (d)).

\begin{figure}[tb]
    \centering
    \vspace{-.0in}
    \subfigure[]{\includegraphics[width=0.22\textwidth]{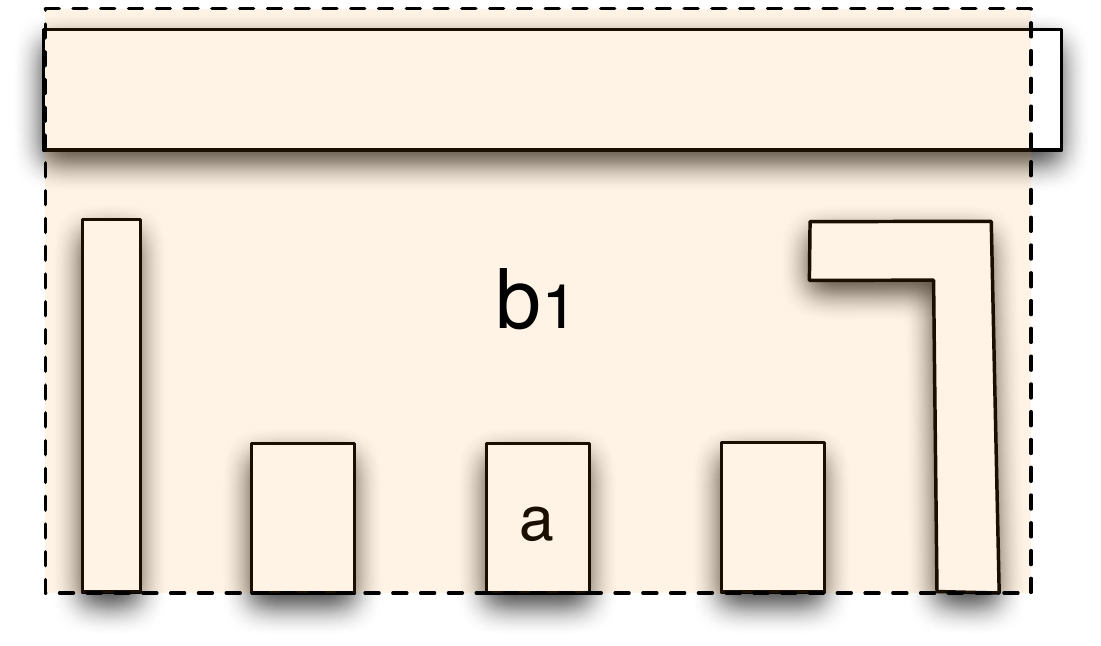}}
    \subfigure[]{\includegraphics[width=0.22\textwidth]{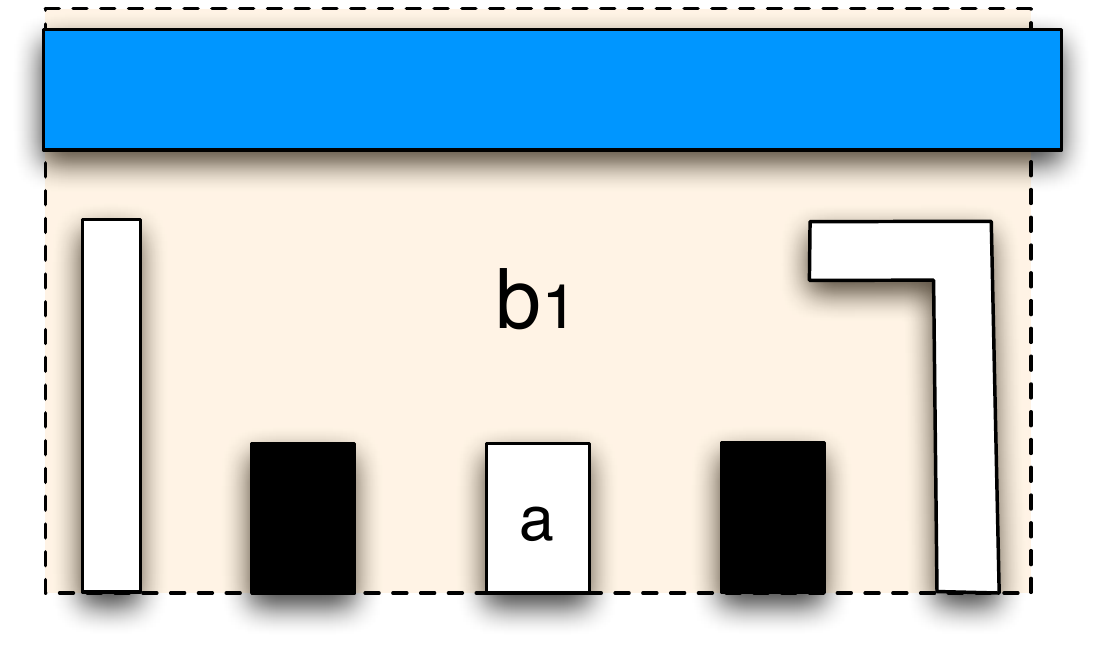}}
    \subfigure[]{\includegraphics[width=0.22\textwidth]{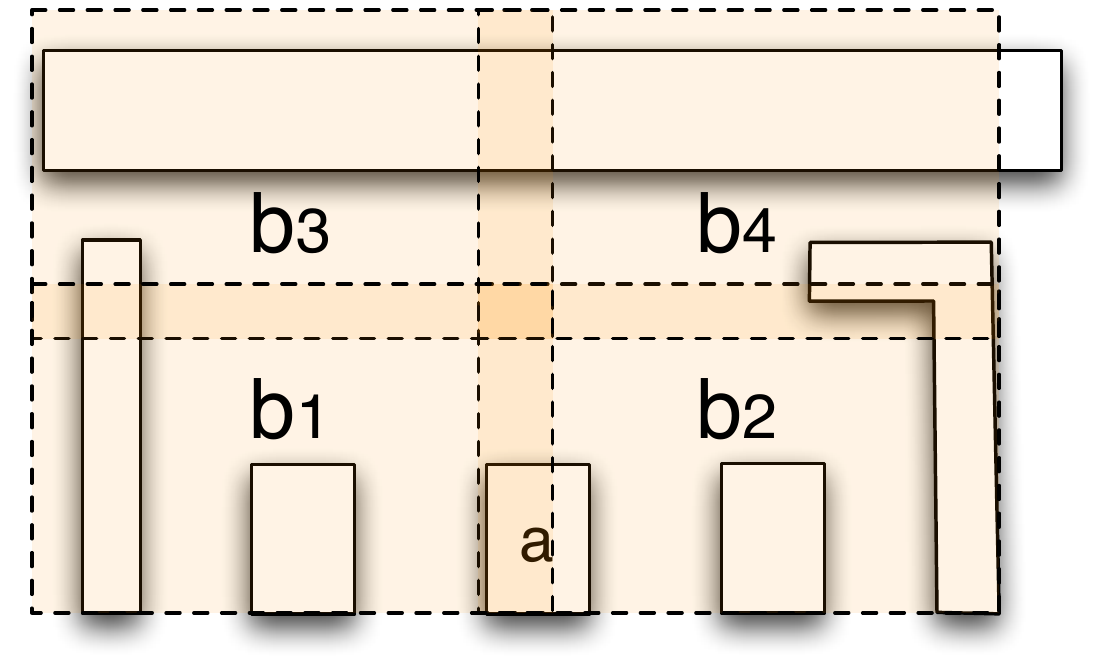}}
    \subfigure[]{\includegraphics[width=0.22\textwidth]{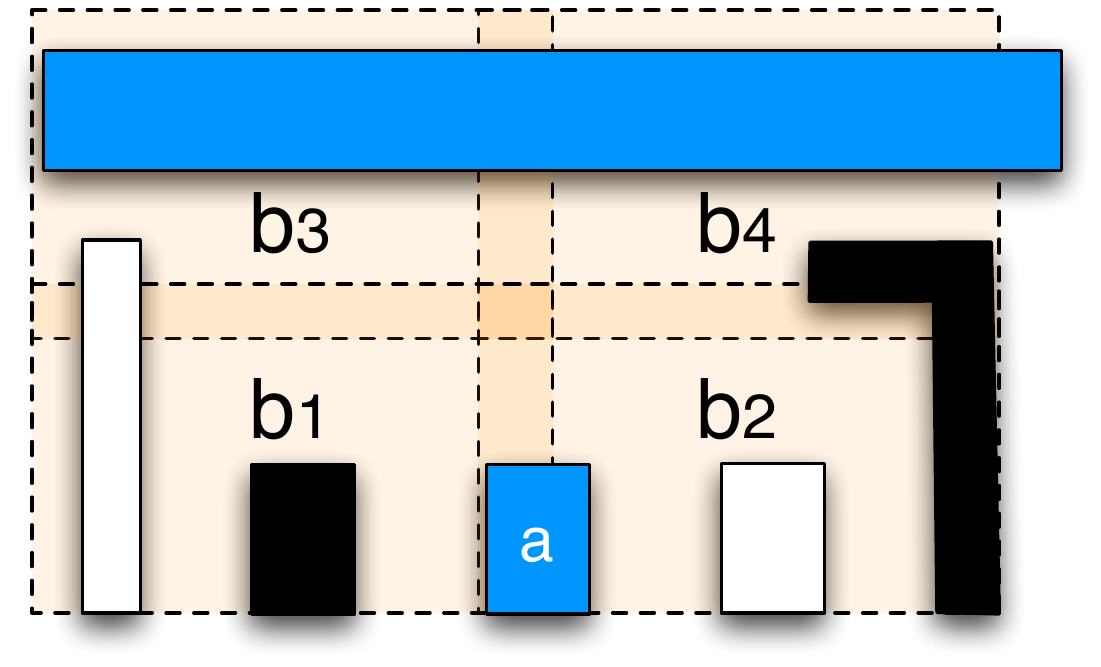}}
    \vspace{-.1in}
    \caption{
    ~Decomposed layout with (a) (b) global balanced density.~(c) (d) local balanced density in all bins.
    }
    \label{fig:balance}
    \vspace{-.2in}
\end{figure}

There are investigations on TPL layout decomposition
\cite{TPL_SPIE08_Cork,TPL_ICCAD2011_Yu,TPL_DAC2012_Fang,TPL_ICCAD2012_Tian,TPLEC_SPIE2013_Yu,TPL_DAC2013_Kuang}
or TPL aware design \cite{DFM_DAC2012_Ma,DFM_ICCAD2012_Lin,DFM_ICCAD2013_Yu}.
\cite{TPL_SPIE08_Cork} provided a three coloring algorithm, which adopts a SAT Solver.
Yu et al. \cite{TPL_ICCAD2011_Yu} proposed a systematic study for the TPL layout decomposition, where they showed that this problem is NP-hard.
Fang et al. \cite{TPL_DAC2012_Fang} presented several graph simplification techniques to reduce the problem size,
and a maximum independent set (MIS) based heuristic for the layout decomposition.
\cite{TPL_ICCAD2012_Tian} proposed a layout decomposer for row structure layout.
However, these existing studies suffer from one or more of the following issues:
(1) cannot integrate the stitch minimization for the general layout, or can only deal with stitch minimization as a post-process;
(2) directly extend the methodologies from DPL, which loses the global view for TPL;
(3) assigning colors one by one prohibits the ability for density balance.

In this paper, we propose a high performance layout decomposer for TPL.
Compared with previous works, our decomposer provides not only less conflict and stitch number, but also more balanced density.
In this work, we focus on the coloring algorithms and leave other layout related optimizations to post-coloring stages,
such as compensation for various mask overlay errors introduced by scanner and mask write control processes.
However, we do explicitly consider balancing density during coloring, since it is known that mask write overlay control generally benefits from improved density balance.

Our key contributions include the following.
(1) Accurately integrate density balance into the mathematical formulation;
(2) Develop a three-way partition based mapping, which not only achieves less conflicts, but also more balanced density;
(3) Propose several techniques to speedup the layout decomposition;
(4) Our experiments show the best results in solution quality while maintaining better balanced density (i.e., less EPE).

The rest of the paper is organized as follows.
Section \ref{sec:problem} presents the basic concepts and the problem formulation.
Section \ref{sec:overview} gives the overall decomposition flow.
Section \ref{sec:algo} presents the details to improve balance density and decomposition performance,
and Section \ref{sec:speedup} shows how we further speedup our decomposer.
Section \ref{sec:result} presents our experimental results, followed by a conclusion in Section \ref{sec:conclusion}.

\vspace{-.1in}
\section{Problem Formulation}
\label{sec:problem}

Given input layout which is specified by features in polygonal shapes, we partition the layout into $n$ bins $B = \{b_1, \dots, b_n\}$.
Note that neighboring bins may share some overlapping.
For each polygonal feature $r_i$, we denote its area as $den_i$, and its area covered by bin $b_k$ as $den_{ki}$.
Clearly $den_i \ge den_{ki}$ for any bin $b_k$.
During layout decomposition, all polygonal features are divided into three masks.
For each bin $b_k$, we define three densities ($d_{k1}, d_{k2}, d_{k3}$), where
$d_{kc} = \sum den_{ki}$, for any feature $r_i$ assigned to color $c$.
Therefore, we can define the local density uniformity as follows:

\begin{mydefinition}[Local Density Uniformity]
For the bin $b_k$ $\in S$, the local density uniformity is $\textrm{max} \{d_{kc}\} / \textrm{min} \{d_{kc}\}$ given three densities $d_{k1}, d_{k2}$ and $d_{k3}$ for three masks and is used to measure the ratio difference of the densities.
A lower value means better local density balance.
The local density uniformity is denoted by $DU_k$.
\end{mydefinition}

For convenience, we use the term density uniformity to refer to local density uniformity in the rest of this paper.
It is easy to see that $DU_k$ is always larger than or equal to 1.
To keep a more balanced density in bin $b_k$, we expect $DU_k$ as small as possible, i.e., close to 1.

\begin{problem}[Density Balanced Layout Decomposition]
Given a layout which is specified by features in polygonal shapes, the layout graphs and the decomposition graphs are constructed.
Our goal is to assign all vertices in the decomposition graph into three colors (masks) to minimize the stitch number and the conflict number,
while keeping all density uniformities $DU_k$ as small as possible.
\end{problem}


\vspace{-.1in}
\section{Overall Decomposition Flow}
\label{sec:overview}

\begin{figure}[htb]
\vspace{-.05in}
  \centering
  \includegraphics[width=0.5\textwidth]{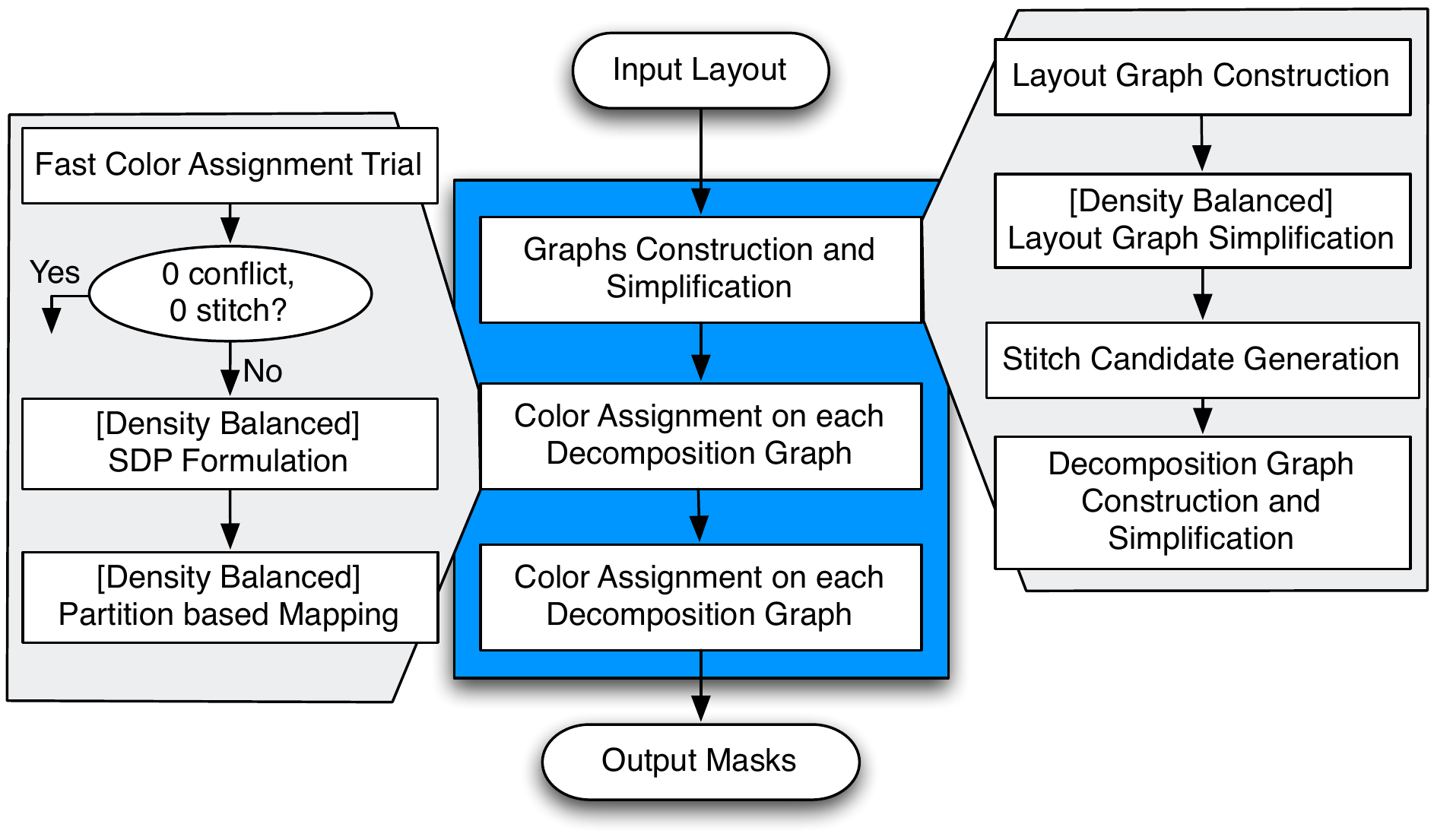}
  \vspace{-.3in}
  \caption{~Overall flow of proposed density balanced decomposer.}
  \label{fig:overview}
\end{figure}

The overall flow of our TPL decomposer is illustrated in Fig. \ref{fig:overview}.
It consists of two stages: graph construction / simplification, and color assignment.
Given input layout, layout graphs and decomposition graphs are constructed, then graph simplifications \cite{TPL_ICCAD2011_Yu}\cite{TPL_DAC2012_Fang} are applied to reduce the problem size.
Two additional graph simplification techniques are introduced in Sec. \ref{sec:nostitch} and \ref{sec:cluster}.
During stitch candidate generation, the methods described in \cite{TPL_DAC2013_Kuang} are applied to search all stitch candidates for TPL.
In second stage, for each decomposition graph, color assignment is proposed to assign each vertex one color.
Before calling SDP formulation, fast color assignment trial is proposed to achieve better speedup (see Section \ref{sec:fastassign}).

\begin{figure}[tb]
\vspace{-.1in}
  \centering
  \subfigure[] {\includegraphics[width=0.15\textwidth]{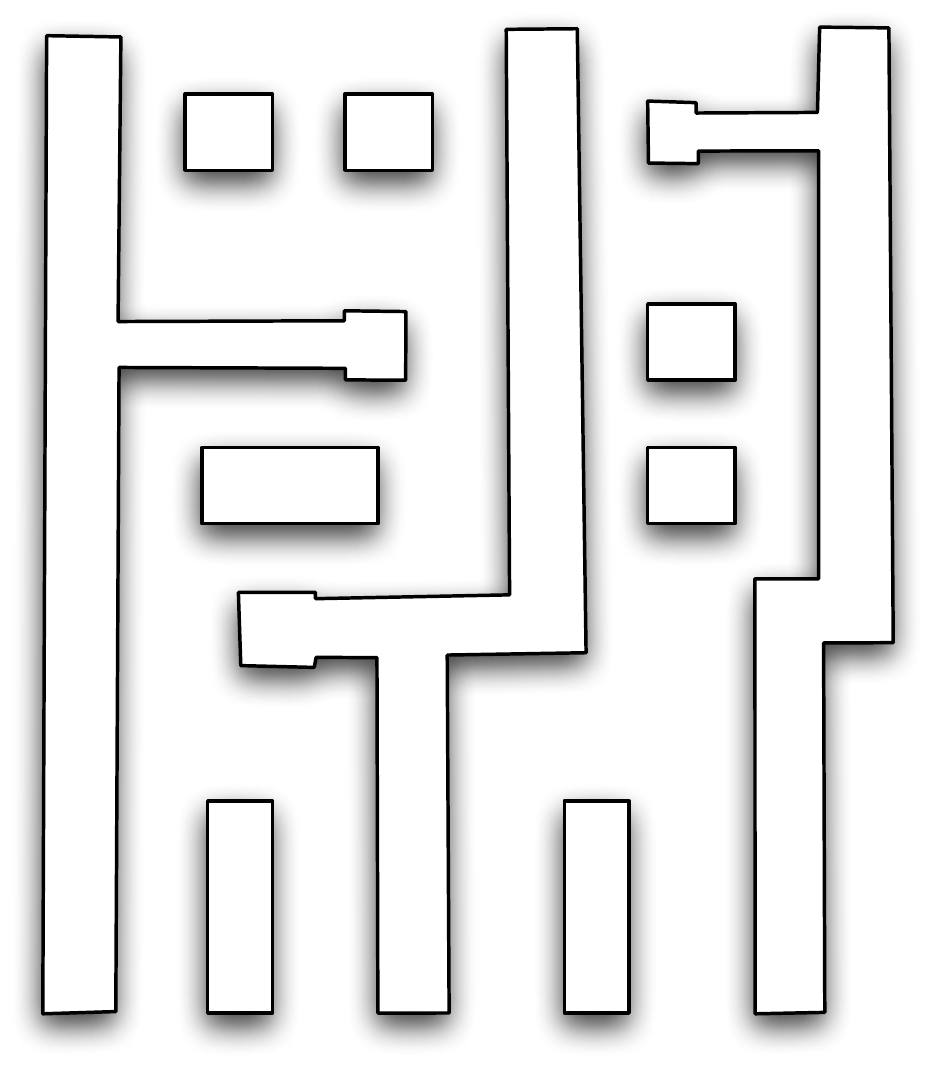}}
  \subfigure[] {\includegraphics[width=0.16\textwidth]{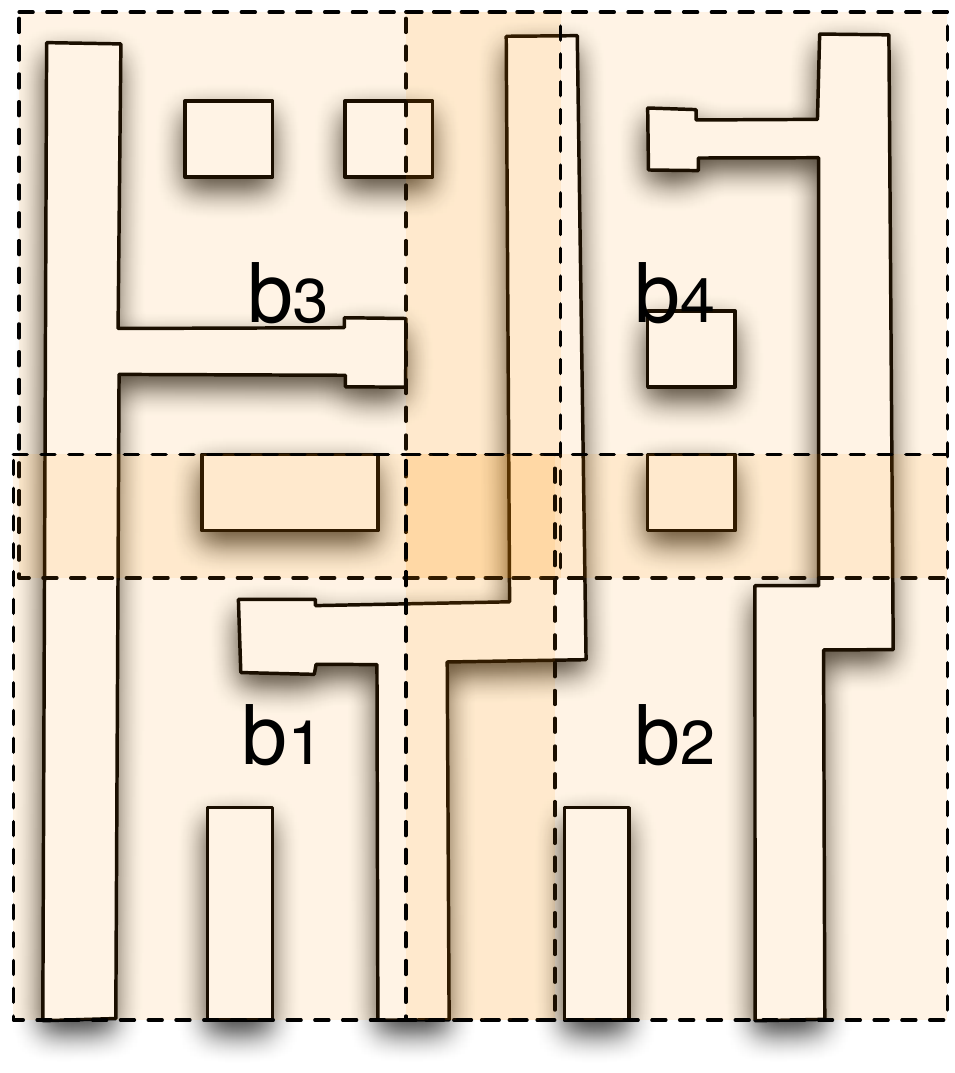}}
  \subfigure[] {\includegraphics[width=0.16\textwidth]{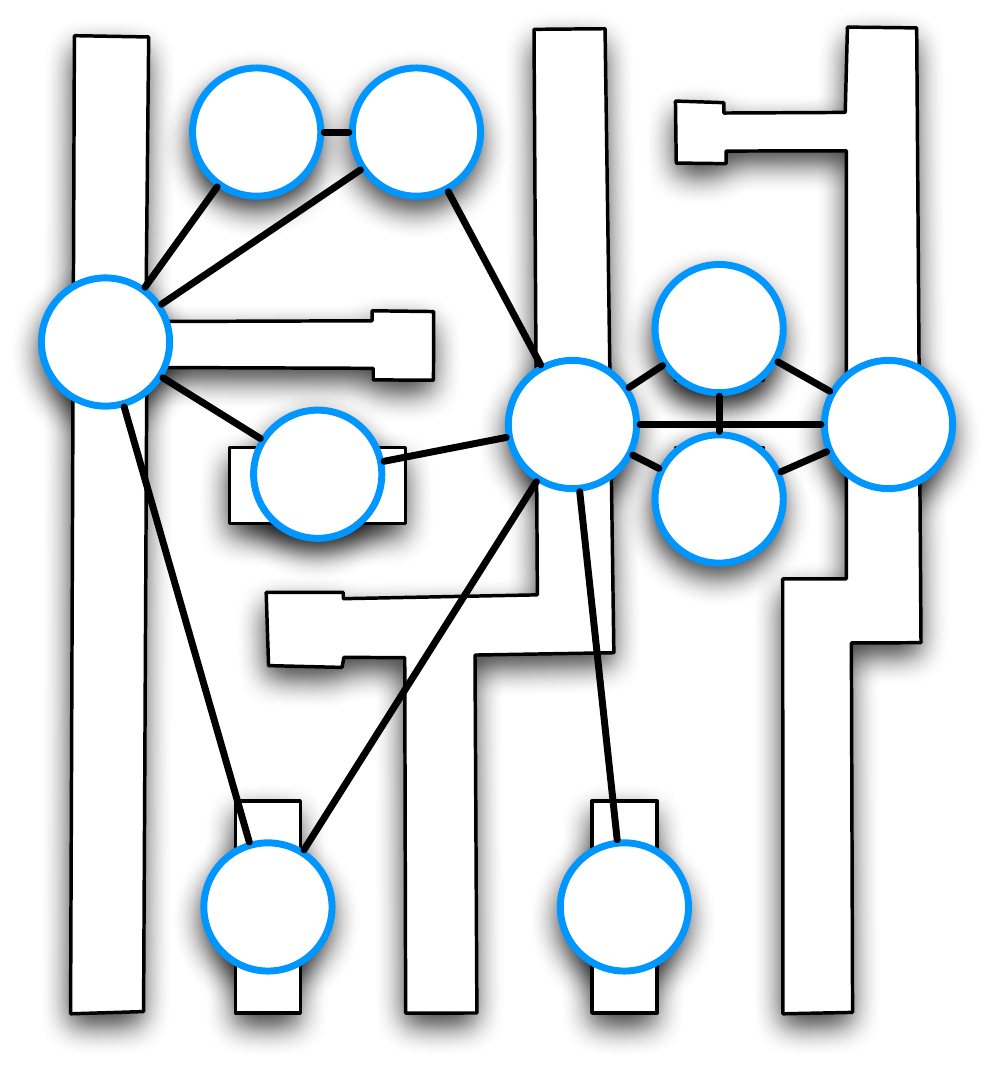}}
  \hspace{.1em}
  \subfigure[] {\includegraphics[width=0.16\textwidth]{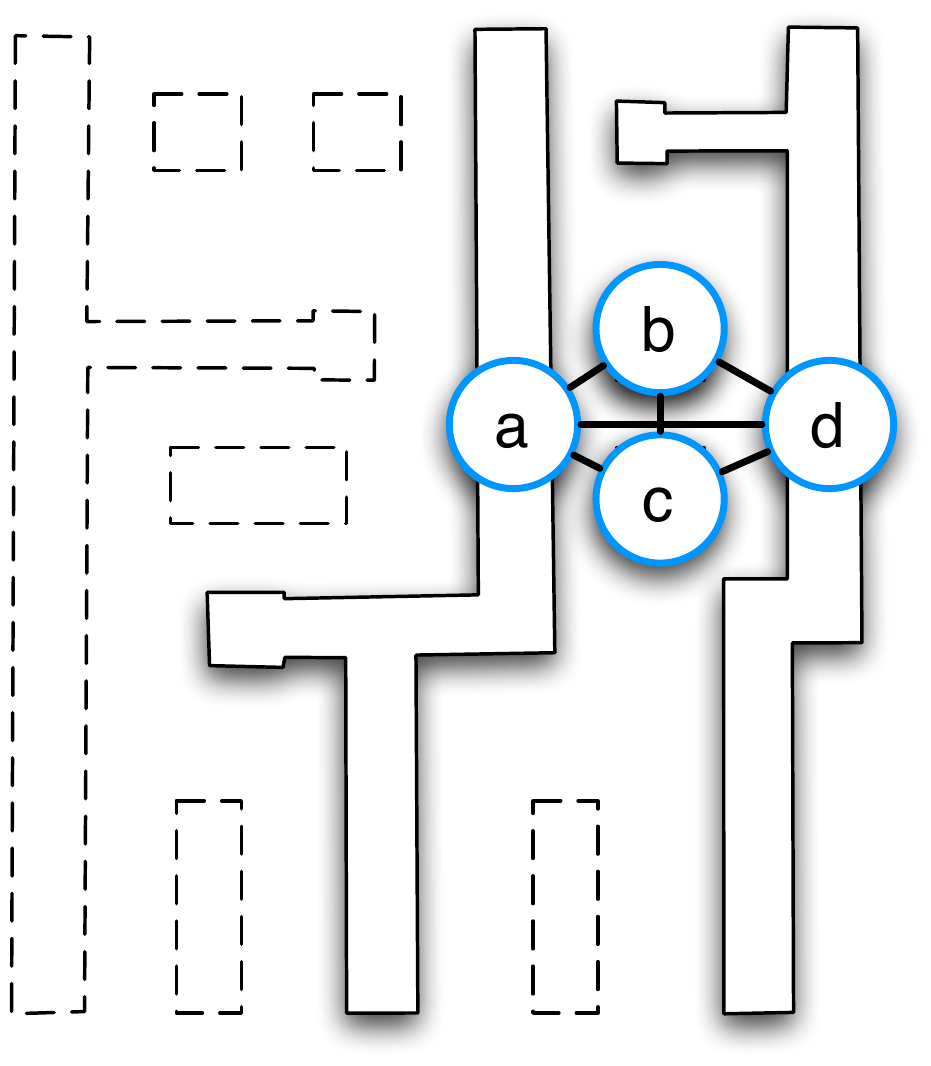}}
  \subfigure[] {\includegraphics[width=0.15\textwidth]{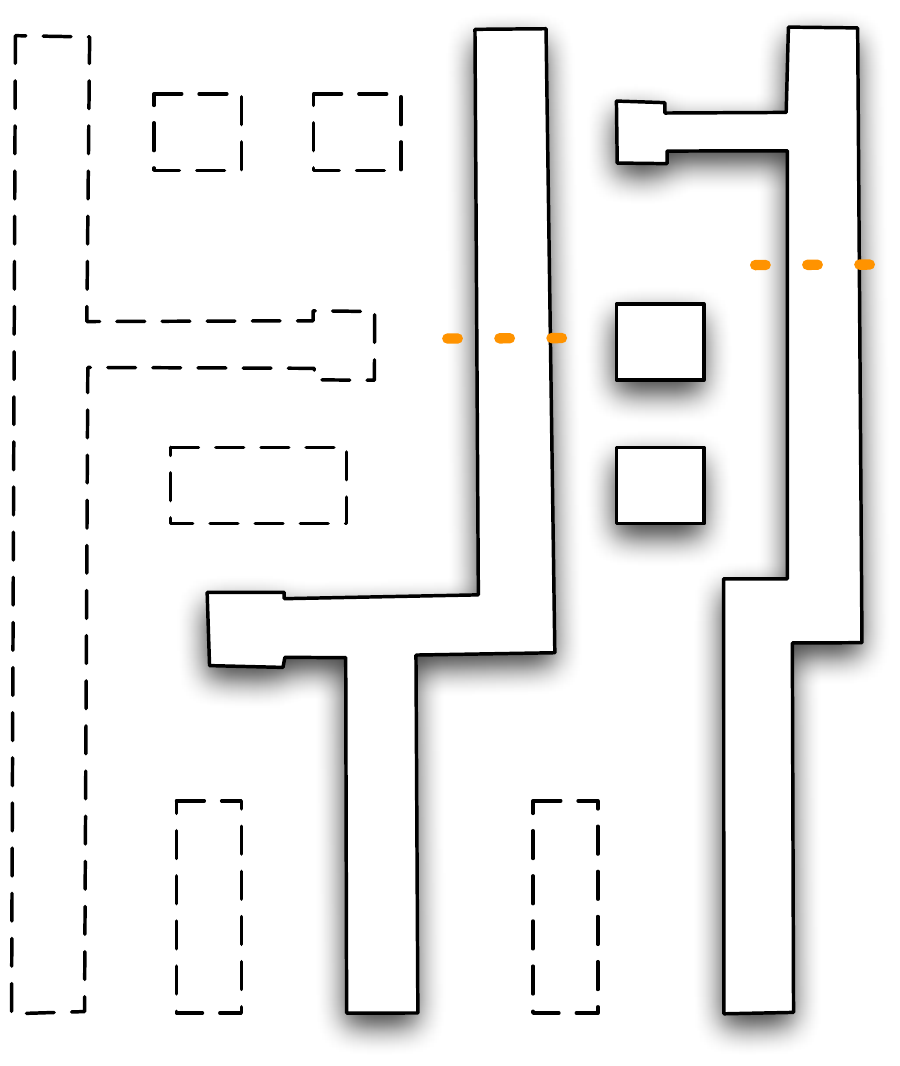}}
  \subfigure[] {\includegraphics[width=0.16\textwidth]{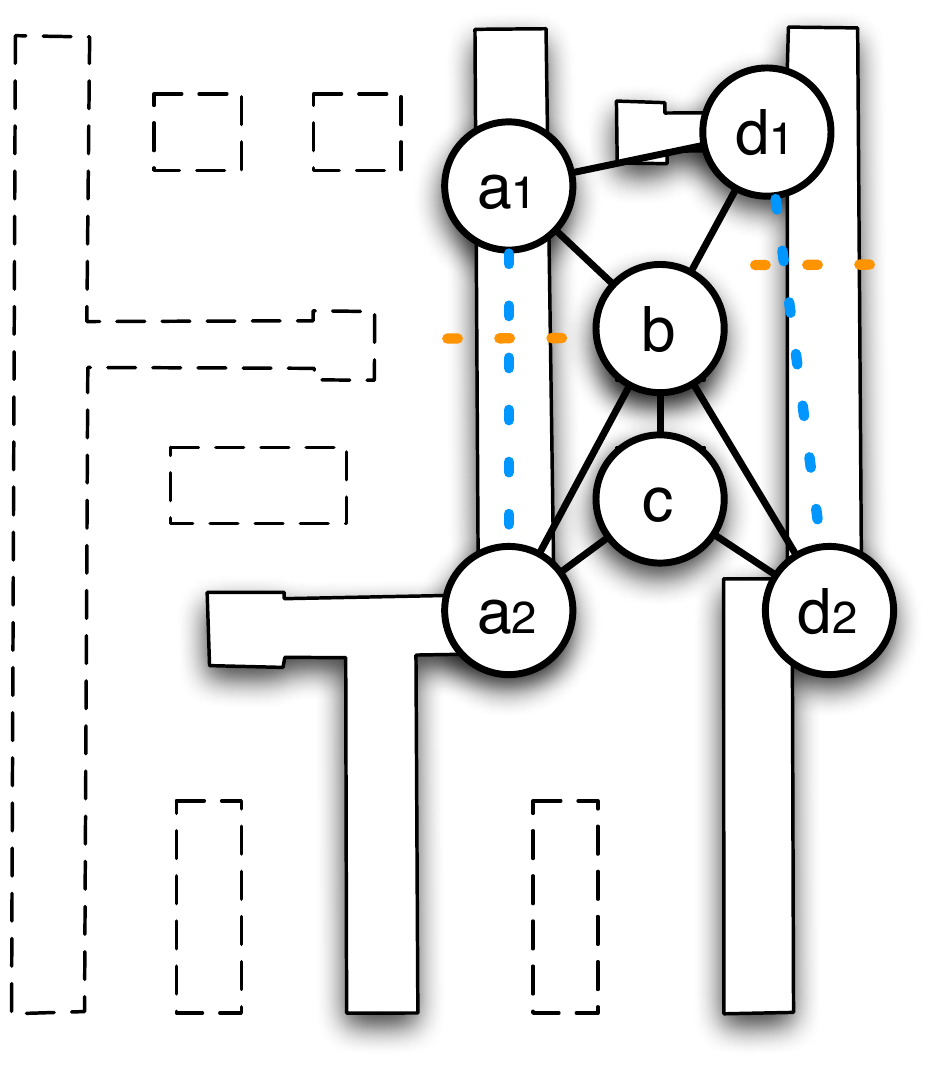}}
  \hspace{.1em}
  \subfigure[] {\includegraphics[width=0.16\textwidth]{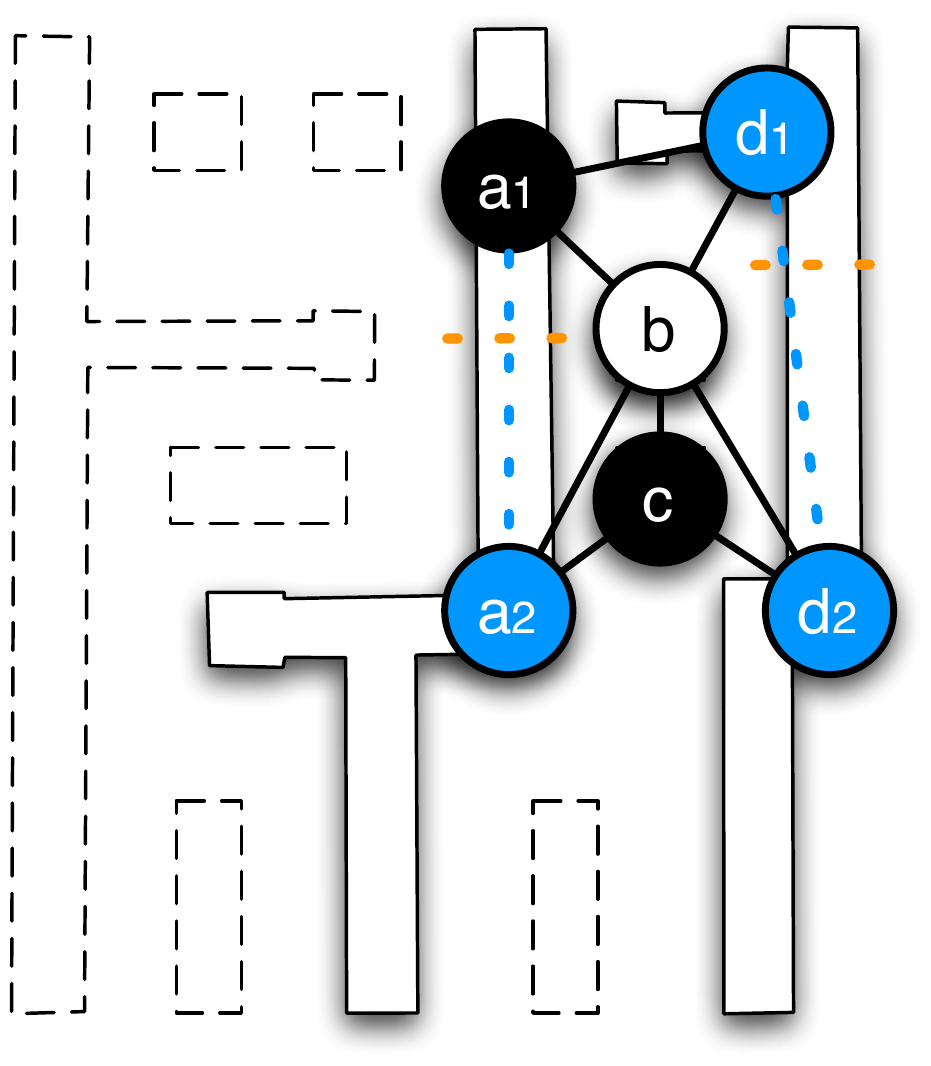}}
  \subfigure[] {\includegraphics[width=0.15\textwidth]{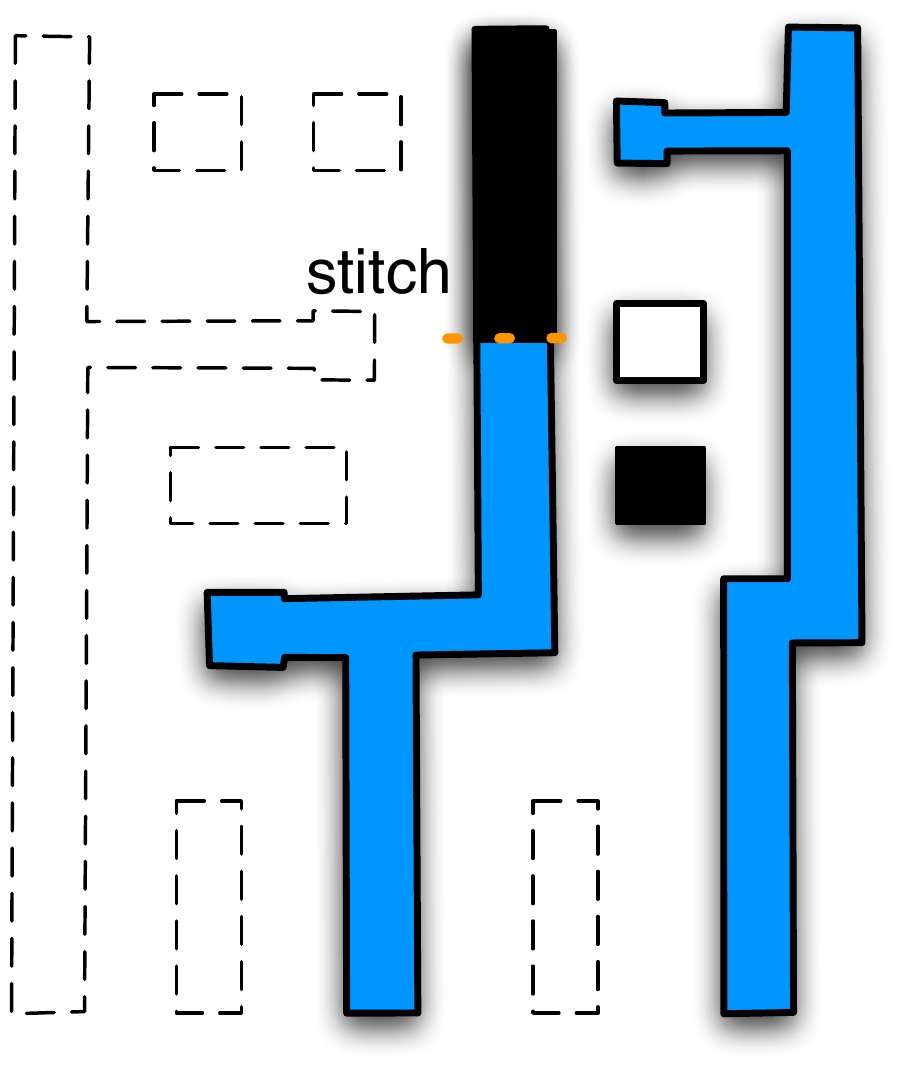}}
  \subfigure[] {\includegraphics[width=0.16\textwidth]{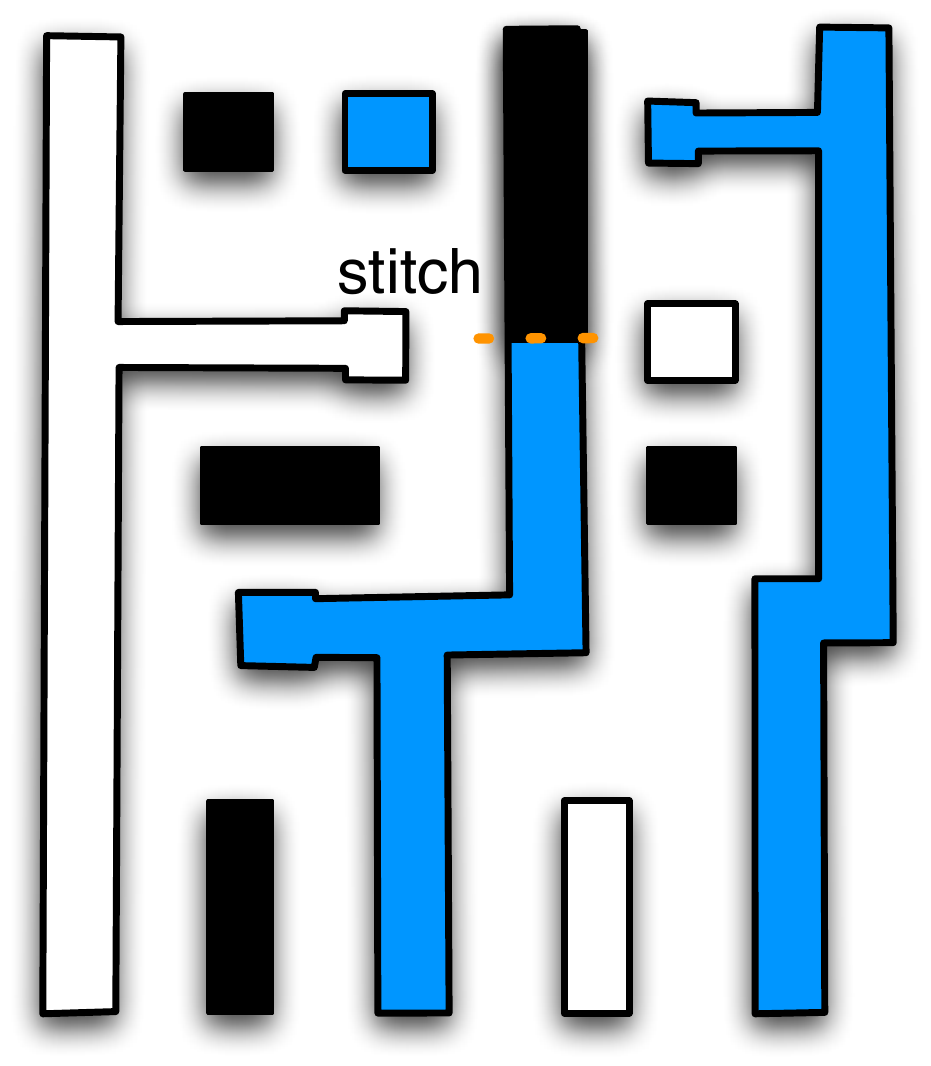}}
  \caption{~An example of the layout decomposition flow.}
  \label{fig:example}
  \vspace{-.4in}
\end{figure}

Fig. \ref{fig:example} illustrates an example to show the decomposition process step by step.
Given the input layout as in Fig. \ref{fig:example}(a), we partition it into a set of bins $\{b_1, b_2, b_3, b_4\}$ (see Fig. \ref{fig:example}(b)).
Then the layout graph is constructed (see Fig. \ref{fig:example}(c)), where the ten vertices representing the ten features in the input layout, and each vertex represents a polygonal feature (shape) where there is an edge (conflict edge) between two vertices if and only if those two vertices are within the minimum coloring distance $min_s$.
During the layout graph simplification, the vertices whose degree equal or smaller than two are iteratively removed from the graph.
The simplified layout graph, shown in Fig. \ref{fig:example}(d), only contains vertices $a, b, c$ and $d$.
Fig. \ref{fig:example}(d) shows the projection results.
Followed by stitch candidate generation \cite{TPL_DAC2013_Kuang}, there are two stitch candidates for TPL (see Fig. \ref{fig:example}(e)).
Based on the two stitch candidates, vertices $a$ and $d$ are divided into two vertices, respectively.
The constructed decomposition graph is given in Fig. \ref{fig:example}(f).
It maintains all the information about conflict edges and stitch candidates, where the solid edges are the conflict edges while the dashed edges are the stitch edges and function as stitch candidates.
In each decomposition graph, a color assignment, which contains semidefinite programming (SDP) formulation and partition based mapping, is carried out.
During color assignment, the six vertices in the decomposition graph are assigned into three groups:
$\{a_1, c\}, \{b\}$ and $\{a_2, d_1, d_2\}$ (see Fig. \ref{fig:example}(g) and Fig. \ref{fig:example}(h)).
Here one stitch on feature $a$ is introduced.
After iteratively recover the removed vertices, the final decomposed layout is shown in Fig. \ref{fig:example}(i).
Our last process should be decomposition graphs merging, which combines the results on all decomposition graphs.
Since this example has only one decomposition graph, this process is skipped.

\vspace{-.1in}
\section{Density Balanced Decomposition}
\label{sec:algo}

Density balance, especially local density balance, is seamlessly integrated into each step of our decomposition flow.
In this section, we first elaborate how to integrate the density balance into the mathematical formulation and corresponding SDP formulation.
Followed by some discussion for density balance in all other steps.

\vspace{-.1in}
\subsection{Density Balanced SDP Algorithm}
\label{sec:db_sdp}

\begin{table}[tb]
\renewcommand{\arraystretch}{1}
\centering
\caption{Notations used in color assignment}
\label{table:notation}
\begin{tabular}{|c|c|}
    \hline \hline
    $CE$          & the set of conflict edges\ \ \ \ \ \\
    \hline
    $SE$          & the set of stitch edges\\
    \hline
    $V$           & the set of features\\
    \hline
    $B$           & the set of local bins\\
    \hline\hline
\end{tabular}
\vspace{-.2in}
\end{table}

For each decomposition graph, density balanced color assignment is carried out.
Some notations used are listed in Table \ref{table:notation}.
See Appendix for some preliminary of semidefinite programming (SDP) based algorithm.

\vspace{-.05in}
\subsubsection{Density Balanced Mathematical Formulation}

The mathematical formulation for the general density balanced layout decomposition is shown in (\ref{eq:math}),
where the objective is to simultaneously minimize the conflict number, the stitch number and the density uniformity of all bins.
Here $\alpha$ and $\beta$ are user-defined parameters for assigning the relative weights among the three values.

\begin{figure}[h]
\vspace{-.2in}
\begin{align}
  \label{eq:math}
  \textrm{min}      &   \sum_{e_{ij} \in CE} c_{ij} + \alpha \sum_{e_{ij} \in SE}s_{ij} + \beta \cdot \sum_{b_k \in B} DU_k \\
  \textrm{s.t}.\ \
    & c_{ij} = ( x_i == x_j )                       \qquad \qquad \qquad \      \forall e_{ij} \in CE    \tag{$1a$}\label{math_a}\\
    & s_{ij} = x_i \oplus x_j                       \qquad \qquad \qquad \qquad \forall e_{ij} \in SE    \tag{$1b$}\label{math_b}\\
    & x_i \in \{1, 2, 3\}                           \qquad \qquad \qquad \qquad \forall r_i \in V        \tag{$1c$}\label{math_c}\\
    & d_{kc} = \sum_{x_i = c} den_{ki}             \quad \qquad \qquad \forall r_i \in V, \ \ b_k \in B       \tag{$1d$}\label{math_d}\\
    & DU_k = \textrm{max}\{d_{kc}\} / \textrm{min}\{d_{kc}\}     \qquad  \forall b_k \in B               \tag{$1e$}\label{math_e}
\end{align}
\vspace{-.3in}
\end{figure}

Here $x_i$ is a variable representing the color (mask) of feature $r_i$, $c_{ij}$ is a binary variable for the conflict edge $e_{ij} \in CE$, and $s_{ij}$ is a binary variable for  the stitch edge $e_{ij} \in SE$.
The constraints (\ref{math_a}) and (\ref{math_b}) are used to evaluate the conflict number and stitch number, respectively.
The constraint (\ref{math_e}) is nonlinear, which makes the program (\ref{eq:math}) hard to be formulated into integer linear programming (ILP) as in \cite{TPL_ICCAD2011_Yu}.
Similar nonlinear constraints occur in the floorplanning problem \cite{FLOOR_DAC00_Chen}, where Tayor expansion is used to linearize the constraint into ILP.
However, Tayor expansion will introduce the penalty of accuracy.
Compared with the traditional time consuming ILP, semidefinite programming (SDP) has been shown to be a better approach in terms of runtime and solution quality tradeoffs \cite{TPL_ICCAD2011_Yu}.
However, how to integrate the density balance into the SDP formulation is still an open question.
In the following we will show that instead of using the painful Tayor expansion, this nonlinear constraint can be integrated into SDP without losing any accuracy.

\subsubsection{Density Balanced SDP Formulation}

In SDP formulation,  the objective function is the representation of vector inner products, i.e., $\vec{v_i} \cdot \vec{v_j}$.
At the first glance, the constraint (\ref{math_e}) cannot be formulated into an inner product format.
However, we will show that density uniformity $DU_k$ can be optimized through considering another form
$DU_k^* = d_{k1} \cdot d_{k2} + d_{k1} \cdot d_{k3} + d_{k2} \cdot d_{k3}$.
This is based on the following observation: maximizing $DU_k^*$ is equivalent to minimizing $DU_k$.

\begin{mylemma}
$DU_k^* = 2/3 \cdot \sum_{i, j \in V} den_{ki} \cdot den_{kj} \cdot (1 - \vec{v_i} \cdot \vec{v_j})$,
where $den_{ki}$ is the density of feature $r_i$ in bin $b_k$.
\label{lem:2}
\vspace{-.1in}
\end{mylemma}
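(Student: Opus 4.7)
The plan is to exploit the standard SDP encoding of three-coloring from the appendix: each color $c\in\{1,2,3\}$ is represented by one of three unit vectors $\vec{u_1},\vec{u_2},\vec{u_3}$ with pairwise inner products $\vec{u_c}\cdot\vec{u_{c'}}=-1/2$ whenever $c\neq c'$, and in an integer feasible solution the SDP variable takes the discrete value $\vec{v_i}=\vec{u_{x_i}}$. Under this encoding,
\[
1-\vec{v_i}\cdot\vec{v_j}\ =\ \begin{cases}0,&x_i=x_j,\\ 3/2,&x_i\neq x_j,\end{cases}
\]
so the $i=j$ diagonal contributes zero and the double sum is effectively supported only on pairs receiving different colors.

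Next I would substitute this value into the right-hand side, letting the factor $\tfrac{2}{3}\cdot\tfrac{3}{2}$ cancel to $1$, which reduces the identity to
\[
\frac{2}{3}\sum_{i,j\in V}den_{ki}\,den_{kj}\bigl(1-\vec{v_i}\cdot\vec{v_j}\bigr)\ =\ \sum_{\{i,j\}:\,x_i\neq x_j}den_{ki}\,den_{kj}.
\]
I would then regroup the remaining sum by color classes: for any unordered pair of distinct colors $\{c,c'\}$, the total contribution from features carrying those two colors is $\bigl(\sum_{x_i=c}den_{ki}\bigr)\bigl(\sum_{x_j=c'}den_{kj}\bigr)=d_{kc}d_{kc'}$, using $d_{kc}=\sum_{x_i=c}den_{ki}$ from the mathematical formulation. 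Summing over the three color pairs $\{1,2\},\{1,3\},\{2,3\}$ reproduces exactly $d_{k1}d_{k2}+d_{k1}d_{k3}+d_{k2}d_{k3}=DU_k^*$, closing the identity.

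The one place that deserves extra care — and where a reader can be thrown off — is the reading of the double sum $\sum_{i,j\in V}$: interpreting it as ordered pairs would double the right-hand side and force the prefactor to be $1/3$ rather than $2/3$, so the stated constant $2/3$ tacitly fixes the unordered-pair convention. Once that is settled, the argument is a purely algebraic expansion with no approximation involved, consistent with the paper's emphasis just before the lemma that this SDP recasting is exact and avoids the Taylor-style linearization required by an ILP treatment.
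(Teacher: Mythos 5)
Your proposal is correct and is essentially the paper's own argument: the paper likewise evaluates $1-\vec{v_i}\cdot\vec{v_j}$ on the discrete color vectors (getting $0$ for same-color pairs and $3/2$ for cross-color pairs) and identifies each restricted cross-color sum with $\frac{3}{2}\,d_{kc}d_{kc'}$, merely working from the products $d_{k1}d_{k2},d_{k1}d_{k3},d_{k2}d_{k3}$ toward the full sum rather than from the sum toward the products as you do. Your remark that the constant $2/3$ forces the unordered-pair reading of $\sum_{i,j\in V}$ is a valid clarification of a convention the paper leaves implicit.
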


\begin{proof}
First of all, let us calculate $d_1 \cdot d_2$.
For all vectors $\vec{v_i} = (1, 0)$ and all vectors $\vec{v_j} = (-\frac{1}{2}, \frac{\sqrt{3}}{2})$,
we can see that
\begin{align}
  & \sum_i \sum_j len_i \cdot len_j \cdot (1 - \vec{v_i} \cdot \vec{v_j}) = \sum_i \sum_j len_i \cdot len_j \cdot 3/2 \notag\\
= & 3/2 \cdot \sum_i len_i \sum_j len_j = 3/2 \cdot d_1 \cdot d_2 \notag
\end{align}
So $d_1 \cdot d_2 = 2/3 \cdot \sum_i \sum_j len_i \cdot len_j \cdot (1 - \vec{v_i} \cdot \vec{v_j})$, where $\vec{v_i} = (1, 0)$ and $\vec{v_j} = (-\frac{1}{2}, \frac{\sqrt{3}}{2})$.
We can also calculate $d_1 \cdot d_3$ and $d_2 \cdot d_3$ using similar methods.
Therefore, 
\begin{align}
  DU_2 & =  d_1 \cdot d_2 + d_1 \cdot d_3 + d_2 \cdot d_3                                       \notag\\
       & =  2/3 \cdot \sum_{i, j \in V} len_i \cdot len_j \cdot (1 - \vec{v_i} \cdot \vec{v_j}) \notag
\end{align}
\end{proof}

Because of Lemma \ref{lem:2}, the $DU_k^*$ can be represented as a vector inner product,
then we have achieved the following theorem.

\begin{mytheorem}
Maximizing $DU_k^*$ can achieve better density balance in bin $b_k$.
\vspace{-.1in}
\end{mytheorem}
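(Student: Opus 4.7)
The plan is to exploit the fact that the sum $d_{k1}+d_{k2}+d_{k3}$ is a coloring-invariant constant: it equals $S_k := \sum_{i \in V} den_{ki}$, the total feature area in bin $b_k$, which is fixed once the layout and the bin partition are fixed. With this constraint in hand, $DU_k^*$ and $DU_k$ can both be analyzed as functions on the simplex $\{(d_{k1},d_{k2},d_{k3}) : d_{kc}\ge 0,\ d_{k1}+d_{k2}+d_{k3}=S_k\}$.

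The first step is a short algebraic identity. Squaring the sum gives
\begin{equation*}
S_k^2 = d_{k1}^2 + d_{k2}^2 + d_{k3}^2 + 2\,DU_k^*,
\end{equation*}
so that $DU_k^* = \tfrac{1}{2}\bigl(S_k^2 - \sum_{c} d_{kc}^2\bigr)$. Since $S_k$ is fixed, maximizing $DU_k^*$ is exactly equivalent to minimizing $\sum_c d_{kc}^2$ on the simplex above. The second step is to invoke convexity of $t\mapsto t^2$ (equivalently, the power-mean or QM-AM inequality): subject to the fixed sum $S_k$, the quantity $\sum_c d_{kc}^2$ attains its unique minimum at the centroid $d_{k1}=d_{k2}=d_{k3}=S_k/3$. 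At that point $\max_c d_{kc}/\min_c d_{kc} = 1$, which is the global infimum of $DU_k$ (recall $DU_k\ge 1$ always).

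The third step is to upgrade this ``same optimizer'' statement to the desired monotone-tendency statement. I would do this via majorization: if triple $\mathbf{d}'$ is obtained from $\mathbf{d}$ by a Robin-Hood transfer (taking mass from the largest coordinate and giving it to the smallest while preserving the sum), then $\sum_c (d'_{kc})^2 < \sum_c d_{kc}^2$ and simultaneously the ratio $\max_c d'_{kc}/\min_c d'_{kc} \le \max_c d_{kc}/\min_c d_{kc}$. Hence increasing $DU_k^*$ pushes $(d_{k1},d_{k2},d_{k3})$ toward equality in the majorization order, which in turn drives $DU_k$ toward $1$, i.e., toward better density balance in bin $b_k$.

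The main obstacle is honesty about the discrete nature of the problem. Because features are indivisible and color choices are combinatorial, the continuous optimum $d_{k1}=d_{k2}=d_{k3}=S_k/3$ is typically infeasible, and two distinct colorings can yield the same $\sum_c d_{kc}^2$ yet different max/min ratios. So a strict biconditional between ``$DU_k^*$ maximum'' and ``$DU_k$ minimum'' over feasible colorings does not hold in general; the claim must be read in the tendency sense that the theorem statement suggests (``can achieve better density balance''). I would make this caveat explicit, and lean on the majorization argument above to justify that, across the feasible colorings, the ordering induced by $DU_k^*$ is aligned with, even if not identical to, the ordering induced by $-DU_k$, which is what the SDP objective needs.
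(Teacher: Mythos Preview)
Your argument is correct and somewhat different from the paper's. The paper's justification (present in a suppressed proof block for the ``observation'' immediately preceding Lemma~\ref{lem:2}) is a calculus computation: write $d_{k3}=n-d_{k1}-d_{k2}$ with $n=d_{k1}+d_{k2}+d_{k3}$ fixed, set $\partial DU_k^*/\partial d_{k1}=\partial DU_k^*/\partial d_{k2}=0$, solve to obtain $d_{k1}=d_{k2}=d_{k3}$, and verify the second-order condition $\partial^2 DU_k^*/\partial d_{k1}^2=-2<0$. This shows only that the continuous maximizer of $DU_k^*$ coincides with the point where $DU_k=1$; the theorem is then asserted by combining this with Lemma~\ref{lem:2}. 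Your route via the identity $DU_k^*=\tfrac{1}{2}\bigl(S_k^2-\sum_c d_{kc}^2\bigr)$ and convexity reaches the same conclusion without derivatives, and your majorization step buys something the paper does not supply: a reason why \emph{increasing} $DU_k^*$ (not just attaining its maximum) moves the triple toward balance, which is the statement actually relevant to the relaxed SDP objective. Your explicit caveat about the discrete feasible set is also an improvement in rigor over the paper, which silently treats the continuous equivalence as sufficient.
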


Note that we can remove the constant $\sum_{i,j \in V} den_{ki} \cdot den_{kj} \cdot 1$ in $DU_k^*$ expression.
Similarly, we can eliminate the constants in the calculation of the conflict and stitch numbers.
The simplified vector program is as follows:

\begin{figure}[h]
\vspace{-.1in}
\begin{align}
  \vspace{-.1in}
  \label{eq:vp}
  \textrm{min}  & \sum_{e_{ij} \in CE} ( \vec{v_i} \cdot \vec{v_j} ) - \alpha \sum_{e_{ij} \in SE} ( \vec{v_i} \cdot \vec{v_j} ) - \beta \cdot \sum_{b_k \in B} DU_k^*\\
  \textrm{s.t}.\ \
    & DU_k^* = - \sum_{i, j \in V} den_{ki} \cdot den_{kj} \cdot ( \vec{v_i} \cdot \vec{v_j})     \quad \forall b_k \in B             \label{vp_a}\tag{$2a$}\\
    & \vec{v_i} \in \{(1, 0), (-\frac{1}{2}, \frac{\sqrt{3}}{2}), (-\frac{1}{2}, -\frac{\sqrt{3}}{2})\}     \label{vp_b}\tag{$2b$}
\end{align}
\vspace{-.2in}
\end{figure}

Formulation (\ref{eq:vp}) is equivalent to the mathematical formulation (\ref{eq:math}), and it is still NP-hard to be solved exactly.
Constraint (\ref{vp_b}) requires the solutions to be discrete.
To achieve a good tradeoff between runtime and accuracy, we can relax (\ref{eq:vp}) into a SDP formulation, as shown in Theorem \ref{the:sdp}.

\begin{mytheorem}
\vspace{0in}
\label{the:sdp}
Relaxing vector program (\ref{eq:vp}) can get the SDP formulation (\ref{eq:sdp}).
\vspace{-.1in}
\end{mytheorem}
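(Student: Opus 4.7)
The plan is to produce the SDP by the standard Goemans--Williamson-style lifting, exploiting the two features of constraint (\ref{vp_b}): (i) the three admissible vectors all have unit norm, and (ii) every pairwise inner product among them is either $1$ (same vector) or $-1/2$ (different vectors). The first fact will translate into diagonal equality constraints, and the second will translate into a lower-bound constraint on off-diagonal inner products; together they give a valid relaxation of (\ref{vp_b}) to continuous unit vectors in $\mathbb{R}^{|V|}$.

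First I would drop the discrete constraint (\ref{vp_b}) and replace it with $\vec{v_i} \cdot \vec{v_i} = 1$ and $\vec{v_i} \cdot \vec{v_j} \ge -1/2$ for all $i \ne j$, arguing that any feasible solution of (\ref{eq:vp}) automatically satisfies these weaker conditions, so the feasible region of the relaxed program contains that of the original. Next I would introduce the Gram matrix $Y$ defined by $Y_{ij} = \vec{v_i} \cdot \vec{v_j}$. A standard fact is that an $n \times n$ symmetric matrix is of the form $Y_{ij} = \vec{v_i} \cdot \vec{v_j}$ for some vectors $\vec{v_i} \in \mathbb{R}^n$ if and only if $Y \succeq 0$. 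This is the key step that turns a vector program into an SDP: optimizing over tuples of vectors is equivalent to optimizing over positive semidefinite matrices.

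With $Y$ in hand, every inner product in the objective and in (\ref{vp_a}) becomes a linear function of the entries of $Y$. Concretely, $\sum_{e_{ij}\in CE}\vec{v_i}\cdot\vec{v_j}$ becomes $\sum_{e_{ij}\in CE} Y_{ij}$, $\sum_{e_{ij}\in SE}\vec{v_i}\cdot\vec{v_j}$ becomes $\sum_{e_{ij}\in SE} Y_{ij}$, and $DU_k^\star$ becomes $-\sum_{i,j\in V} den_{ki}\cdot den_{kj}\cdot Y_{ij}$, which are all linear in $Y$. The relaxed program then has a linear objective in $Y$ together with the constraints $Y \succeq 0$, $Y_{ii}=1$ for all $i\in V$, and $Y_{ij}\ge -1/2$ for all $i\ne j$; this is precisely a semidefinite program, which I expect to be the form stated as (\ref{eq:sdp}).

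The main obstacle, and the place where the proof could be most easily mis-stated, is the justification for using $-1/2$ as the lower bound: one must verify that this bound is valid for every feasible discrete assignment (which follows from the three admissible vectors being at $120^\circ$) while being loose enough that the resulting SDP is genuinely convex and tractable. A secondary point to be careful with is that the relaxation only gives a lower bound on the original minimization objective; I would close the proof by noting that any discrete solution to (\ref{eq:vp}) yields a feasible $Y$ of the form $V^\top V$ satisfying all the relaxed constraints with the same objective value, so the SDP value is a valid lower bound and (\ref{eq:sdp}) is indeed a relaxation of (\ref{eq:vp}).
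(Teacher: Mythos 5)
Your proposal is correct and takes essentially the same route as the paper's (omitted) proof: drop the discreteness of (\ref{vp_b}) while retaining unit norms and the $-\frac{1}{2}$ lower bound, pass to the Gram matrix, and use the standard equivalence between Gram matrices and positive semidefinite matrices (with Cholesky factorization for the converse direction), so that the objective becomes the linear form $A \bullet X$ and the SDP optimum lower-bounds (\ref{eq:vp}). The one small mismatch is that formulation (\ref{eq:sdp}) imposes $X_{ij} \ge -\frac{1}{2}$ only for conflict edges $e_{ij} \in CE$, whereas you impose it for all pairs $i \ne j$; your version is still a valid (indeed tighter) relaxation, but to arrive exactly at (\ref{eq:sdp}) the inner-product bound should be relaxed only on conflict edges.
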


\begin{figure}[h]
\vspace{-.1in}
\begin{align}
    \label{eq:sdp}
    \textrm{SDP:\ \ min}    & \ \ \ A \bullet X                     \\
                & X_{ii} = 1,              \ \ \forall i \in V         \tag{$3a$}\\
                & X_{ij} \ge -\frac{1}{2}, \ \ \forall e_{ij} \in CE   \tag{$3b$}\\
                & X \succeq 0                                          \tag{$3c$}\label{sdp_c}
\end{align}
\vspace{-.3in}
\end{figure}
where $A_{ij}$ is the entry that lies in the $i$-th row and the $j$-th column of matrix $A$:
\begin{equation}
    A_{ij} =
    \left\{
    \begin{array}{cc}
        1 + \beta \cdot \sum_{k} den_{ki} \cdot den_{kj},                  & \forall b_k \in B, e_{ij} \in CE\\
        -\alpha + \beta \cdot \sum_{k} den_{ki} \cdot den_{kj},         & \forall b_k \in B, e_{ij} \in SE\\
        \beta \cdot \sum_{k} den_{ki} \cdot den_{kj},                        & \textrm{otherwise}
    \end{array}
    \right.
    \notag
\label{eq:sdp_a}
\end{equation}

Due to space limit, the detailed proof is omitted.
The solution of (\ref{eq:sdp}) is continuous instead of discrete, and provides a lower bound of vector program (\ref{eq:vp}).
In other words, (\ref{eq:sdp}) provides an approximated solution to (\ref{eq:vp}).
\subsection{Density Balanced Mapping}
\label{sec:db_mapping}

Each $X_{ij}$ in solution of (\ref{eq:sdp}) corresponds to a feature pair $(r_i, r_j)$.
The value of $X_{ij}$  provides a guideline, i.e., whether two features $r_i$ and $r_j$ should be in same color.
If $X_{ij}$ is close to $1$, features $r_i$ and $r_j$ tend to be in the same color (mask);
while if it is close to $-0.5$, $r_i$ and $r_j$ tend to be in different colors (masks).
With these guidelines a mapping procedure is adopted to finally assign all input features into three colors (masks).

\subsubsection{Limitations of Greedy Mapping}

In \cite{TPL_ICCAD2011_Yu}, a greedy approach was applied for the final color assignment.
The idea is straightforward: all $X_{ij}$ values are sorted, and vertices $r_i$ and $r_j$ with larger $X_{ij}$ value tend to be in the same color.
The $X_{ij}$ can be classified into two types: clear and vague.
If most of the $X_{ij}$s in matrix $X$ are clear (close to 1 or -0.5), this greedy method may achieve good result.
However, if the decomposition graph is not 3-colorable, some values in matrix $X$ are vague.
For the vague $X_{ij}$, e.g., 0.5, the greedy method may not be so effective.

\subsubsection{Density Balanced Partition based Mapping}

Contrary to the previous greedy approach, we propose a partition based mapping, which can solve the assignment problem for the vague $X_{ij}$s in a more effective way.
The new mapping is based on a three-way maximum-cut partitioning.
The main ideas are as follows.
If a $X_{ij}$ is vague, instead of only relying on the SDP solution, we also take advantage of the information in decomposition graph.
The information is captured through constructing a graph, denoted by $G_M$.
Through formulating the mapping as a three-way partitioning on the graph $G_M$, our mapping can provide a global view to search better solutions.

\begin{algorithm}[thb]
\caption{Partition based Mapping}
\label{alg:mapping}
\begin{algorithmic}[1]
   \REQUIRE Solution matrix $X$ of the program (\ref{eq:sdp}).
   \STATE Label each non-zero entry $X_{i, j}$ as a triplet $(X_{ij}, i, j)$;
   \STATE Sort all $(X_{ij}, i, j)$ by $X_{ij}$;
   \FOR { all triples with $X_{ij} > th_{unn}$}
      \STATE Union(i, j);
   \ENDFOR
   \FOR { all triples with $X_{ij} < th_{sp}$}
      \STATE Separate(i, j);
   \ENDFOR
   \STATE Construct graph $G_M$;
   \IF{graph size $\le$ 3}
      \STATE return;
   \ELSIF{graph size $\le 7$}
      \STATE Backtracking based three-way partitioning;
   \ELSE
      \STATE FM based three-way partitioning;
   \ENDIF
\end{algorithmic}
\end{algorithm}

\begin{figure}[tb]
    \centering
    \subfigure[]{\includegraphics[width=0.16\textwidth]{DAC13_DG1}}
    \hspace{.1in}
    \subfigure[]{\includegraphics[width=0.16\textwidth]{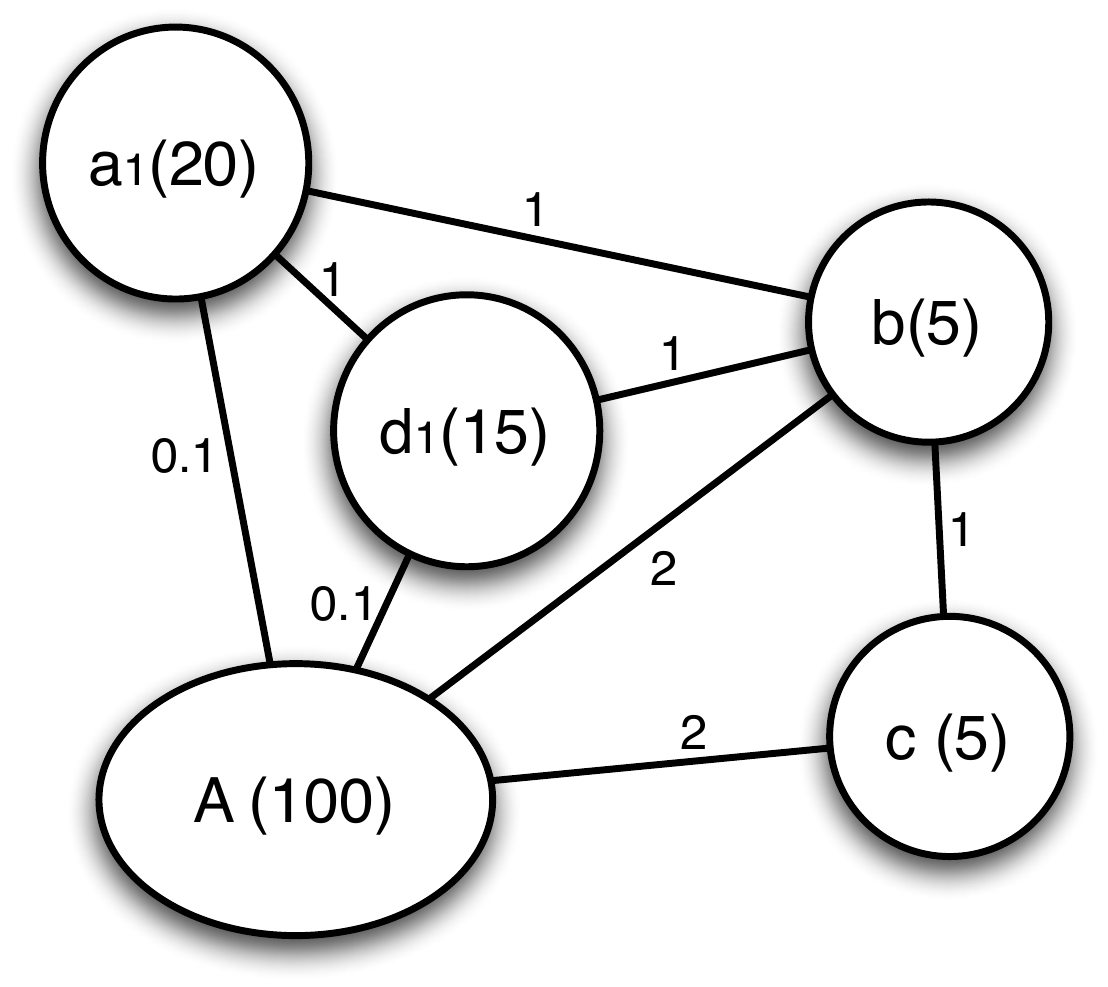}}
    \subfigure[]{\includegraphics[width=0.16\textwidth]{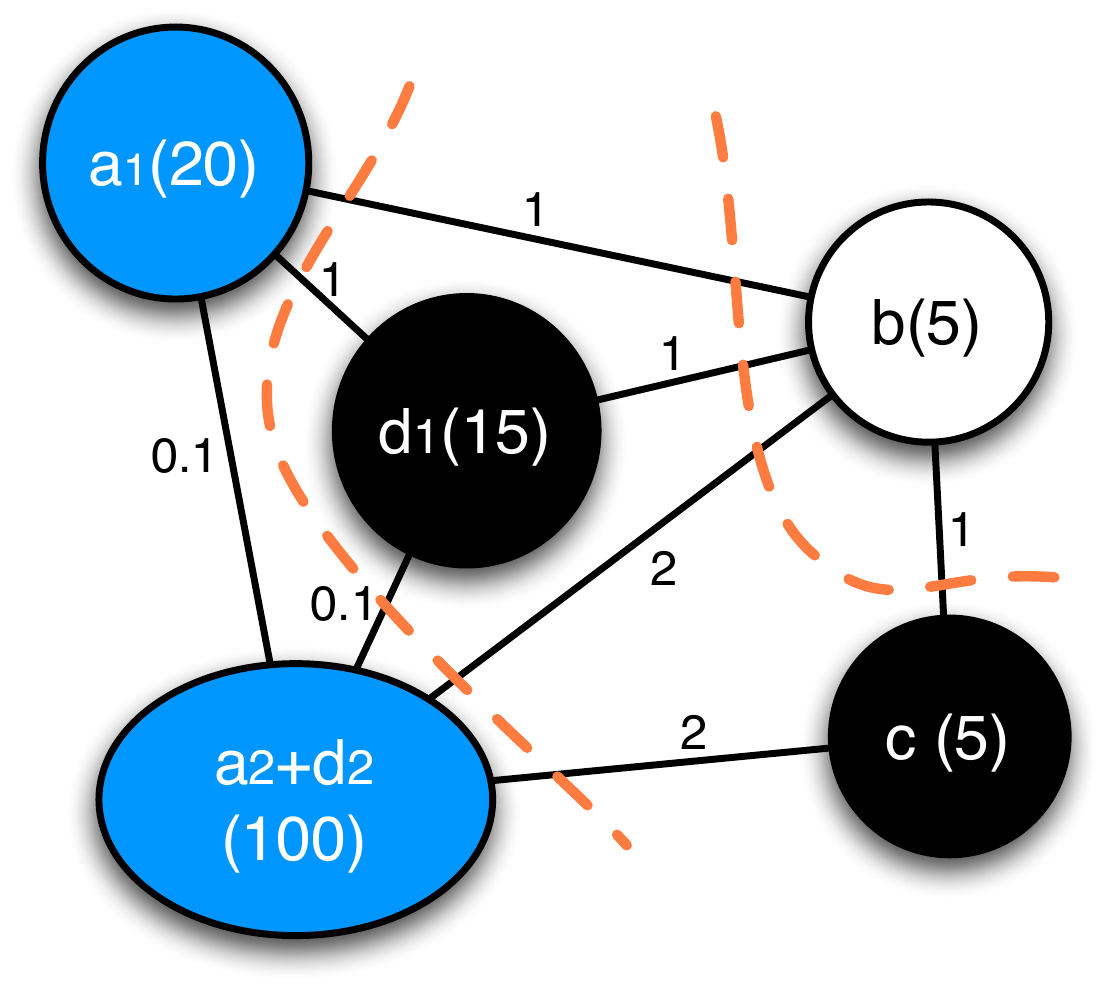}}
    \subfigure[]{\includegraphics[width=0.16\textwidth]{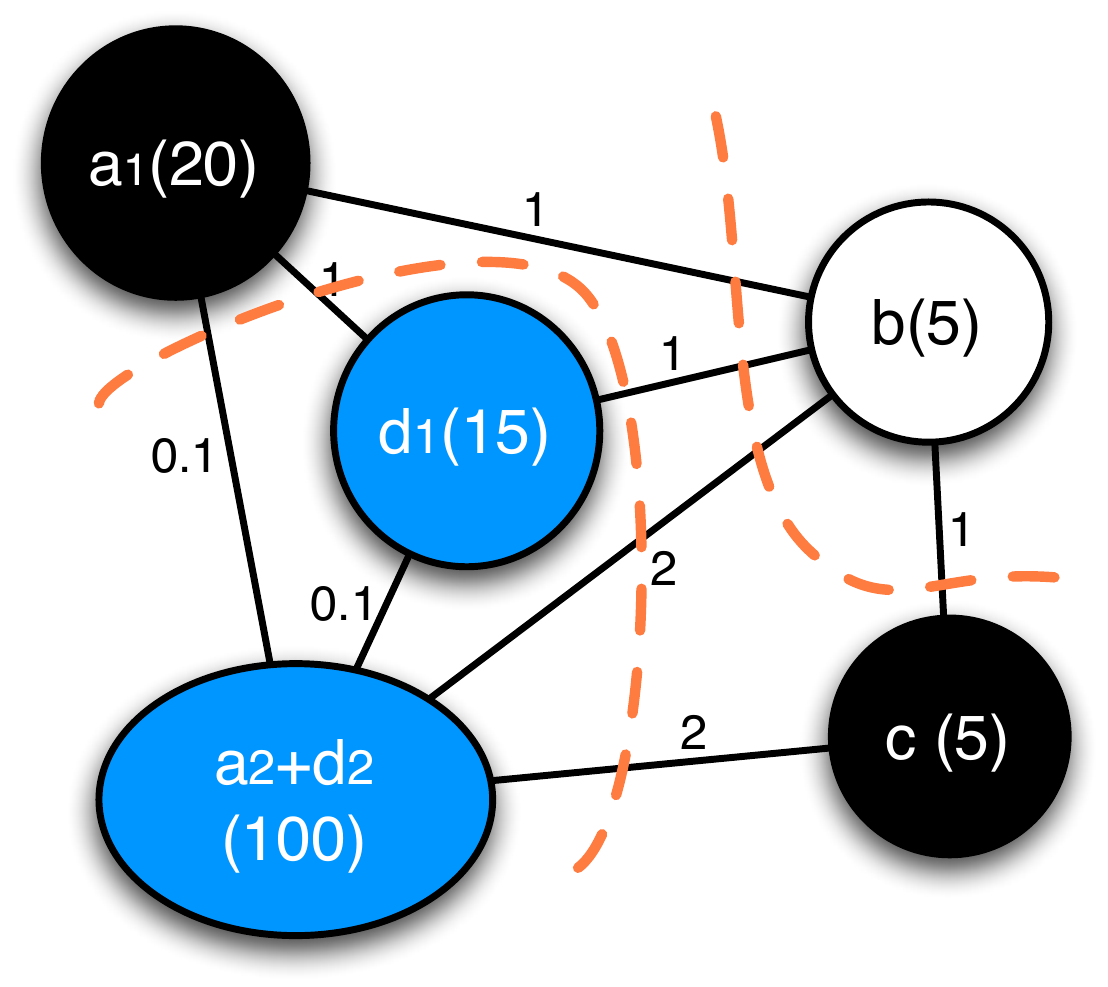}}
    \vspace{-.1in}
    \caption{
    Density Balanced Mapping.
    ~(a) Decomposition graph.~(b) Construct graph $G_M$.
    ~(c) Mapping result with cut value 8.1 and density uniformities 24.
    ~(d) A better mapping with cut 8.1 and density uniformities 23.
    }
    \label{fig:db_mapping}
    \vspace{-.1in}
\end{figure}

Algorithm \ref{alg:mapping} shows our partition based mapping procedure.
Given the solutions from program (\ref{eq:sdp}), some triplets are constructed and sorted to maintain all non-zero $X_{ij}$ values (lines 1--2).
The mapping incorporates two stages to deal with the two different types.
The first stage (lines 3--8) is similar to that in \cite{TPL_ICCAD2011_Yu}.
If $X_{ij}$ is clear then the relationship between vertices $r_i$ and $r_j$ can be directly determined.
Here $th_{unn}$ and $th_{sp}$ are user-defined threshold values.
For example, if $X_{ij} > th_{unn}$, which means that $r_i$ and $r_j$ should be in the same color,
then function Union(i, j) is applied to merge them into a large vertex.
Similarly, if $X_{ij} < th_{sp}$, then function Separate(i, j) is used to label $r_i$ and $r_j$ as incompatible.
In the second stage (lines 9--16) we deal with the vague $X_{ij}$ values.
During the previous stage some vertices have been merged, therefore the total vertex number is not large.
Here we construct a graph $G_M$ to represent the relationships among all the remanent vertices (line 9).
Each edge $e_{ij}$ in this graph has a weight representing the cost if vertices $i$ and $j$ are assigned into same color.
Therefore, the color assignment problem can be formulated as a maximum-cut partitioning problem on $G_M$ (line 10--16).

Through assigning a weight to each vertex representing its density, graph $G_M$ is able to balance density among different bins.
Based on the $G_M$, a partitioning is performed to simultaneously achieve a maximum-cut and balanced weight among different parts.
Note that we need to modify the gain function, then in each move, we try to achieve a more balanced and larger cut partitions.

An example of the density balanced mapping is shown in Fig. \ref{fig:db_mapping}.
Based on the decomposition graph (see Fig. \ref{fig:db_mapping} (a)), SDP is formulated.
Given the solutions of SDP, after the first stage of mapping, vertices $a_2$ and $d_2$ are merged in to a large vertex.
As shown in Fig. \ref{fig:db_mapping}(b), the graph $G_M$ is constructed, where each vertex is associated with a weight.
There are two partition results with the same cut value 8.1 (see Fig. \ref{fig:db_mapping} (c) and Fig. \ref{fig:db_mapping} (d)).
However, their density uniformities are 24 and 23, respectively.
To keep a more balanced density result, the second partitioning in Fig. \ref{fig:db_mapping} (c) is adopted as color assignment result.


It is well known that the maximum-cut problem, even for a 2-way partition, is NP-hard.
However, we observe that in many cases, after the global SDP optimization, the graph size of $G_M$ could be quite small, i.e., less than 7.
For these small cases, we develop a backtracking based method to search the entire solution space.
Note that here backtracking can quickly find the optimal solution even through three-way partitioning is NP-hard.
If the graph size is larger, we propose a heuristic method, motivated by the classic FM partitioning algorithm
\cite{PAR_DAC82_FM}\cite{PAR_TC89_Sanchis}.
Different from the classic FM algorithm, we make the following modifications.
(1) In the first stage of mapping, some vertices are labeled as incomparable, therefore before moving a vertex from one partition to another, we should check whether it is legal.
(2) Classical FM algorithm is for min-cut problem, we need to modify the gain function of each move to achieve a maximum cut.

The runtime complexity of graph construction is $O(m)$, where $m$ is the vertex number in $G_M$.
The runtime of three-way maximum-cut partitioning algorithm is $O(m logm)$.
Besides, the first stage of mapping needs $O(n^2logn)$ \cite{TPL_ICCAD2011_Yu}.
Since $m$ is much smaller than $n$, the complexity of density balanced mapping is $O(n^2logn)$.

\vspace{-.1in}
\subsection{Density Balanced Layout Graph Simplification}

Here we show that the layout graph simplification, which was proposed in \cite{TPL_ICCAD2011_Yu}, can consider the local density balance as well.
During layout graph simplification, we iteratively remove and push all vertices with degree less than or equal to two.
After the color assignment on the remained vertices, we iteratively recover all the removed vertices and assign legal colors.
Instead of randomly picking one, we search a legal color which is good for the density uniformities.

\vspace{-.1in}
\section{Speedup Techniques}
\label{sec:speedup}

Our layout decomposer applies a set of graph simplification techniques proposed by recent works:
\begin{itemize}
  \item Independent Component Computation \cite{TPL_ICCAD2011_Yu}\cite{TPL_DAC2012_Fang}\cite{TPL_DAC2013_Kuang}; 
  \item Vertex with Degree Less than 3 Removal \cite{TPL_ICCAD2011_Yu}\cite{TPL_DAC2012_Fang}\cite{TPL_DAC2013_Kuang};
  \item 2-Edge-Connected Component Computation \cite{TPL_ICCAD2011_Yu}\cite{TPL_DAC2012_Fang}\cite{TPL_DAC2013_Kuang};
  \item 2-Vertex-Connected Component Computation \cite{TPL_DAC2012_Fang}\cite{TPL_DAC2013_Kuang}.
\end{itemize} 
Apart from the above graph simplifications, our decomposer proposes a set of novel speedup techniques, which would be introduced in this section.

\vspace{-.1in}
\subsection{LG Cut Vertex Stitch Forbiddance}
\label{sec:nostitch}

\begin{figure}[hbt]
    \centering
    \vspace{-.1in}
    \subfigure[]{\includegraphics[width=0.16\textwidth]{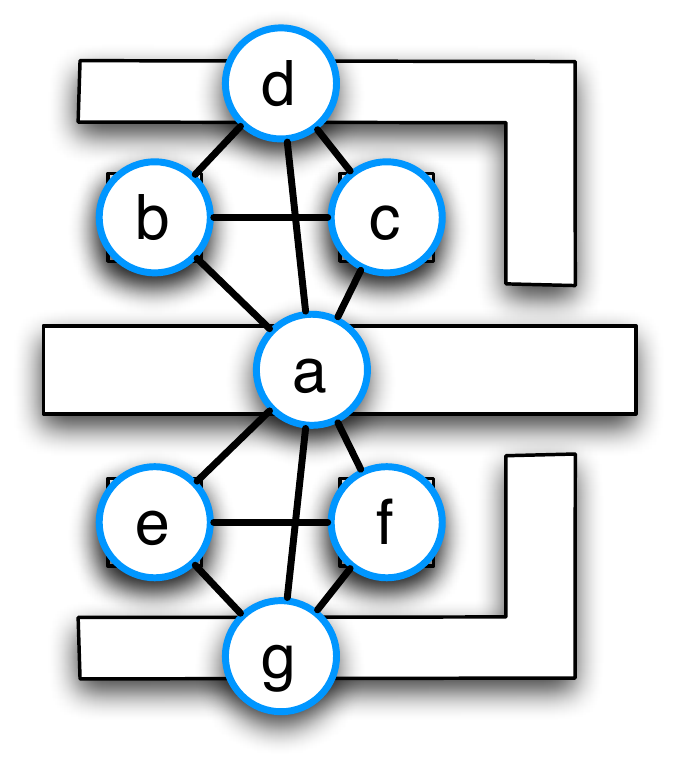}}
    \hspace{.1in}
    \subfigure[]{\includegraphics[width=0.16\textwidth]{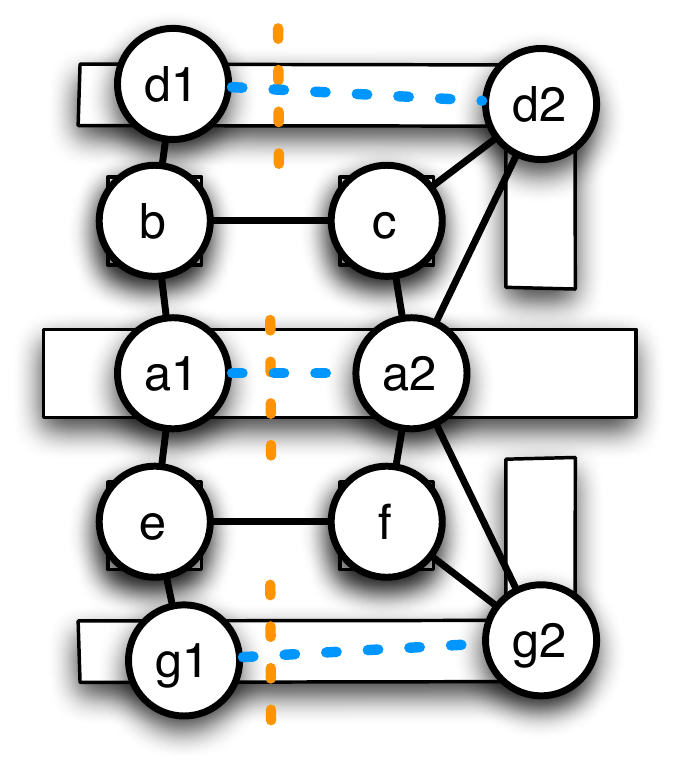}}
    \subfigure[]{\includegraphics[width=0.16\textwidth]{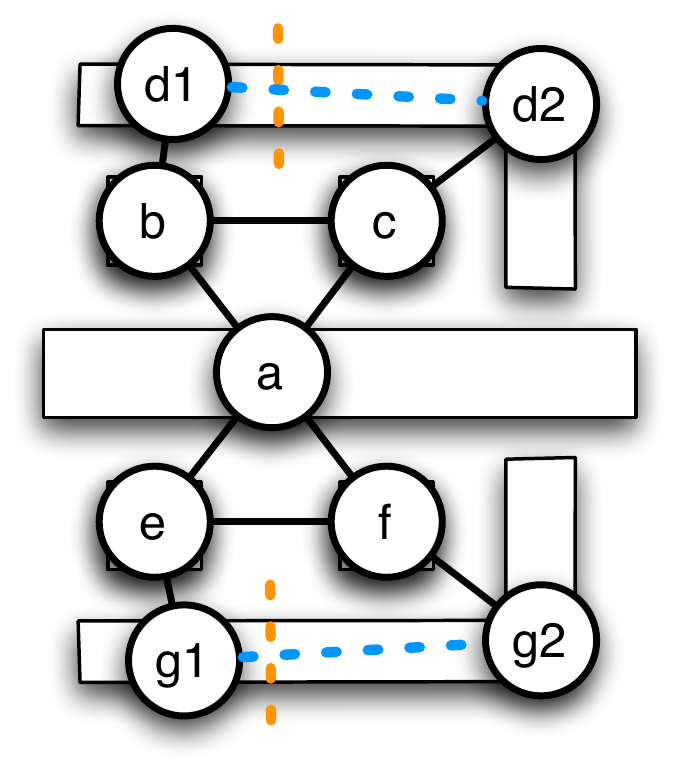}}
    \hspace{.1in}
    \subfigure[]{\includegraphics[width=0.15\textwidth]{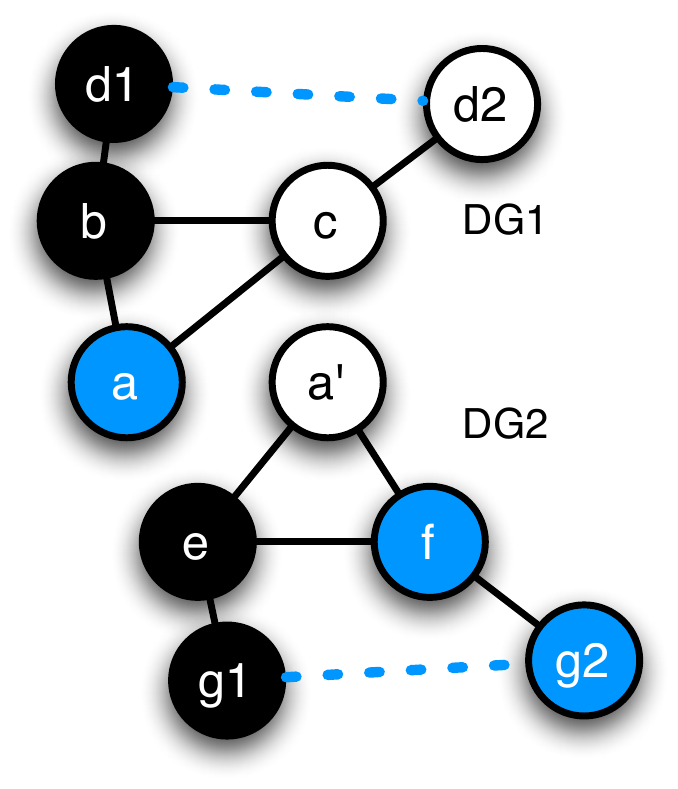}}
    \vspace{-.1in}
    \caption{Layout graph cut vertex stitch forbiddance.}
    \label{fig:nostitch}
    \vspace{-.1in}
\end{figure}

A vertex of a graph is called a cut vertex if its removal decomposes the graph into two or more connected components.
Cut vertices can be identified through the process of bridge computation \cite{TPL_ICCAD2011_Yu}.
During stitch candidate generation, forbidding any stitch candidate on cut vertices can be helpful for later decomposition graph simplification.
Fig. \ref{fig:nostitch} (a) shows a layout graph, where feature $a$ is a cut vertex, since its removal can partition the layout graph into two parts: \{b, c, d\} and \{e, f, g\}.
If stitch candidates are introduced within $a$, the corresponding decomposition graph is illustrated in Fig. \ref{fig:nostitch} (b), which is hard to be further simplified.
If we forbid the stitch candidate on $a$, the corresponding decomposition graph is shown in Fig. \ref{fig:nostitch} (c), where $a$ is still cut vertex in decomposition graph.
Therefore we can apply 2-connected component computation \cite{TPL_DAC2012_Fang} to simplify the problem size, and apply color assignment separately (see Fig. \ref{fig:nostitch} (d)).

\vspace{-.05in}
\subsection{Decomposition Graph Vertex Clustering}
\label{sec:cluster}

\begin{figure}[tb]
    \centering
    \vspace{-.1in}
    \subfigure[]{\includegraphics[width=0.15\textwidth]{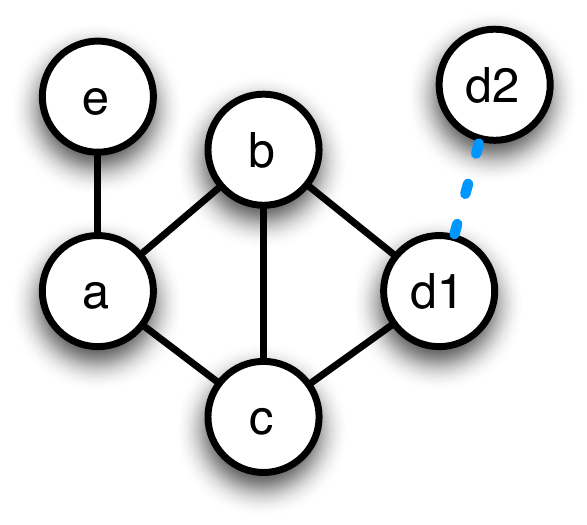}}
    \subfigure[]{\includegraphics[width=0.15\textwidth]{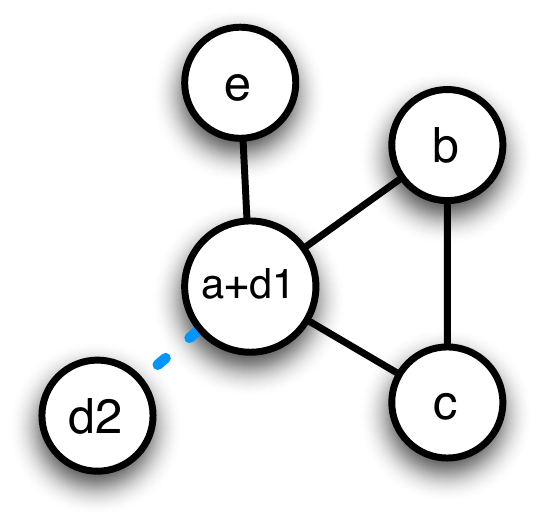}}
    \vspace{.1in}
    \caption{DG vertex clustering to reduce the decomposition graph size.}
    \label{fig:cluster}
    \vspace{-.1in}
\end{figure}

Decomposition graph vertex clustering is a speedup technique to further reduce the decomposition graph size.
As shown in Fig. \ref{fig:cluster} (a), vertices $a$ and $d_1$ share the same conflict relationships against $b$ and $c$.
Besides, there is no conflict edges between $a$ and $d_1$.
If no conflict is introduced, vertices $a$ and $d_1$ should be assigned the same color,
therefore we can cluster them together, as shown in Fig. \ref{fig:cluster} (b).
Note that the stitch and conflict relationships are also merged.
Applying vertex clustering in decomposition graph can further reduce the problem size.

\vspace{-.05in}
\subsection{Fast Color Assignment Trial}
\label{sec:fastassign}

Although the SDP and the partition based mapping can provide high performance for color assignment,
it is still expensive to be applied to all the decomposition graphs.
We derive a fast color assignment trial before calling SDP based method.
If no conflict or stitch is introduced, our trial solves the color assignment problem in linear time.
Note that SDP method is skipped only when decomposition graph can be colored without stitch or conflict,
our fast trial does not lose any solution quality.
Besides, our preliminary results show that more than half of the decomposition graphs can be decomposed using this fast method.
Therefore, the runtime can be dramatically reduced.

\begin{algorithm}[htb]
\caption{Fast Color Assignment Trial} 
\label{alg:trial}
\begin{algorithmic}[1]
   \REQUIRE Decomposition graph $G$, stack $S$.
   \WHILE{$\exists n \in G$ s.t. $d_{conf}(n) < 3$ \& $d_{stit}(n) < 2$}
      \STATE $S.push(n)$; $G.delete(n)$;
   \ENDWHILE
   \IF{$G$ is not empty}
     \STATE Recover all vertices in $S$;
     \RETURN \FALSE;
   \ELSE
   \WHILE{ $ ! S.empty()$}
      \STATE $n = S.pop()$; $G.add(n)$;
	    \STATE Assign $n$ a legal color;
   \ENDWHILE
   \RETURN \TRUE;
   \ENDIF
\end{algorithmic}
\end{algorithm}

The fast color assignment trial is shown in Algorithm \ref{alg:trial}.
First, we iteratively remove the vertex with conflict degree ($d_{conf}$) less than 3 and stitch degree ($d_{stit}$) less than 2 (lines 1--3).
If some vertices cannot be removed, we recover all the vertices in stack $S$,
then return $false$;
Otherwise, the vertices in $S$ are iteratively popped (recovered) (lines 8--12).
For each vertex $n$ popped, since it is connected with at most one stitch edge, we can always assign one color without introducing conflict or stitch.

\begin{table*}[tb]
\centering
\renewcommand{\arraystretch}{0.9}
\caption{Comparison of Runtime and Performance.}
\label{tab:result1}
\begin{tabular}{|c|c|c|c|c||c|c|c|c||c|c|c|c||c|c|c|c|}
  \hline \hline
  \multirow{2}{*}{Circuit}&\multicolumn{4}{c||}{ICCAD'11 \cite{TPL_ICCAD2011_Yu}} &\multicolumn{4}{c||}{DAC'12 \cite{TPL_DAC2012_Fang}}
                          &\multicolumn{4}{c||}{DAC'13 \cite{TPL_DAC2013_Kuang}\footnotemark{}}  &\multicolumn{4}{c|}{SDP+PM}\\
  \cline{2-17} &cn\#  &st\#   & cost & CPU(s)       &cn\#  &st\#  &cost  &CPU(s)      &cn\#  &st\#  &cost  & CPU(s)     &cn\#  &st\#  &cost  & CPU(s) \\
  \hline                                                                                                                  
  C432         & 3    & 1     & 3.1  & 0.09         &0     & 6    & 0.6  & 0.03        &0    &4   &0.4    &0.01         &0     &4     &0.4   &0.2     \\
  C499         & 0    & 0     & 0    & 0.07         &0     & 0    & 0    & 0.04        &0    &0   &0      &0.01         &0     &0     &0     &0.2     \\
  C880         & 1    & 6     & 1.6  & 0.15         &1     & 15   & 2.5  & 0.05        &0    &7   &0.7    &0.01         &0     &7     &0.7   &0.3     \\
  C1355        & 1    & 2     & 1.2  & 0.07         &1     & 7    & 1.7  & 0.07        &0    &3   &0.3    &0.01         &0     &3     &0.3   &0.3     \\
  C1908        & 0    & 1     & 0.1  & 0.07         &1     & 0    & 1    & 0.1         &0    &1   &0.1    &0.01         &0     &1     &0.1   &0.3     \\
  C2670        & 2    & 4     & 2.4  & 0.17         &2     & 14   & 3.4  & 0.16        &0    &6   &0.6    &0.04         &0     &6     &0.6   &0.4     \\
  C3540        & 5    & 6     & 5.6  & 0.27         &2     & 15   & 3.5  & 0.2         &1    &8   &1.8    &0.05         &1     &8     &1.8   &0.5     \\
  C5315        & 7    & 7     & 7.7  & 0.3          &3     & 11   & 4.1  & 0.27        &0    &9   &0.9    &0.05         &0     &9     &0.9   &0.7     \\
  C6288        & 82   & 131   & 95.1 & 3.81         &19    & 341  & 53.1 & 0.3         &14   &191 &33.1   &0.25         &1     &213   &22.3  &2.7     \\
  C7552        & 12   & 15    & 13.5 & 0.77         &3     & 46   & 7.6  & 0.42        &1    &21  &3.1    &0.1          &0     &22    &2.2   &1.1     \\
  S1488        & 1    & 1     & 1.1  & 0.16         &0     & 4    & 0.4  & 0.08        &0    &2   &0.2    &0.01         &0     &2     &0.2   &0.3     \\
  S38417       & 44   & 55    & 49.5 & 18.8         &20    & 122  & 32.2 & 1.25        &19   &55  &24.5   &0.42         &19    &55    &24.5  &7.9     \\
  S35932       & 93   & 18    & 94.8 & 89.7         &46    & 103  & 56.3 & 4.3         &44   &41  &48.1   &0.82         &44    &48    &48.8  &21.4    \\
  S38584       & 63   & 122   & 75.2 & 92.1         &36    & 280  & 38.8 & 3.7         &36   &116 &47.6   &0.77         &37    &118   &48.8  &22.2    \\
  S15850       & 73   & 91    & 82.1 & 79.8         &36    & 201  & 56.1 & 3.7         &36   &97  &45.7   &0.76         &34    &101   &44.1  &20.0    \\
  \hline                                                                                                   
  avg.         &25.8  & 30.7  &28.9  &19.1          &11.3  &60.87 &17.42 & 0.978       &10.1 &37.4&13.8   &0.22         &9.07  &39.8  &13.0  &5.23     \\
  ratio        &&&\textbf{2.2} &\textbf{3.65}       &&&\textbf{1.34}&\textbf{0.19}     &&&\textbf{1.06}&\textbf{0.04}   &&&\textbf{1.0}    &\textbf{1.0}       \\
  \hline \hline
\end{tabular}
\vspace{-.1in}
\end{table*}

\begin{table*}[!bhtp]
\centering
\renewcommand{\arraystretch}{1.0}
\caption{Comparison on Very Dense Layouts}
\label{tab:result1}
\begin{tabular}{|c|c|c|c|c||c|c|c|c||c|c|c|c|}
  \hline \hline
  \multirow{2}{*}{Circuit}&\multicolumn{4}{c||}{ICCAD 2011 \cite{TPL_ICCAD2011_Yu}}&\multicolumn{4}{c||}{DAC 2012 \cite{TPL_DAC2012_Fang}}&\multicolumn{4}{c|}{SDP+PM}\\
  \cline{2-13}
             &cn\#     &st\#     & cost & CPU(s)            &cn\#     &st\#     &cost    &CPU(s)            &cn\#  &st\#    &cost    & CPU(s) \\
  \hline                                                                                                                
 mul\_top    &836      &44       &840.4   &236              &457      &0        &457     &0.8               &118   &271     &145.1      &57.6    \\
 exu\_ecc    &119      &1        &119.1   &11.1             &53       &0        &53      &0.7               &22    &64      &28.4       &4.3     \\
 c9\_total   &886      &228      &908.8   &47.4             &603      &641      &667.1   &0.52              &117   &1009    &217.9      &7.7     \\
 c10\_total  &2088     &554      &2143.4  &52               &1756     &1776     &1933.6  &1.1               &248   &1876    &435.6      &19      \\
 s2\_total   &2182     &390      &2221    &936.8            &1652     &5976     &2249.6  &4                 &703   &5226    &1225.6     &70.7    \\
 s3\_total   &6844     &72       &6851.2  &7510.1           &4731     &13853    &6116.3  &13.1              &958   &10572   &2015.2     &254.5   \\
 s4\_total   &NA       &NA       &NA      & $>$10000        &3868     &13632    &5231.2  &13                &1151  &11091   &2260.1     &306     \\
 s5\_total   &NA       &NA       &NA      & $>$10000        &4650     &16152    &6265.2  &12.9              &1391  &13683   &2759.3     &350.4   \\
  \hline                                                                                                                                        
  avg.       &NA       &NA       &NA      & $>$3600         &2221.3   &6503.8   &2871.6  &5.8               &588.5 &5474    &1135.9    &134     \\
  ratio      &&& -     &\textbf{$>$27.0}                    &&&\textbf{2.53}&\textbf{0.05}                  &&&\textbf{1.0} &\textbf{1.0}\\
  \hline \hline
\end{tabular}
\vspace{-.1in}
\end{table*}

\section{Experimental Results}
\label{sec:result}

We implement our decomposer in C++ and test it on an Intel Xeon 3.0GHz Linux machine with 32G RAM.
ISCAS 85\&89 benchmarks from \cite{TPL_ICCAD2011_Yu} are used,
where the minimum coloring spacing $dis_m$ was set the same with previous studies \cite{TPL_ICCAD2011_Yu}\cite{TPL_DAC2012_Fang}.
Besides, to perform a comprehensive comparison, we also test on other two benchmark suites.
The first suite is with six dense benchmarks (``c9\_total''-``s5\_total''),
while the second suite is two synthesized OpenSPARC T1 designs ``mul\_top'' and ``exu\_ecc'' with Nangate 45nm standard cell library \cite{nangate}.
When processing these two benchmark suites we set the minimum coloring distance $dis_m = 2 \cdot w_{min}+3 \cdot s_{min}$,
where $w_{min}$ and $s_{min}$ denote the minimum wire width and the minimum spacing, respectively.
The parameter $\alpha$ is set as $0.1$.
The size of each bin is set as $10 \cdot dis_m \times 10 \cdot dis_m$.
We use CSDP \cite{CSDP} as the solver for the semidefinite programming (SDP).

\vspace{-.05in}
\subsection{Comparison with other decomposers}

\footnotetext{The results of DAC'13 decomposition are from \cite{TPL_DAC2013_Kuang}.}
In the first experiment, we compare our decomposer with the state-of-the-art layout decomposers which are not balanced density aware
\cite{TPL_ICCAD2011_Yu}\cite{TPL_DAC2012_Fang}\cite{TPL_DAC2013_Kuang}.
We obtain the binary files from \cite{TPL_ICCAD2011_Yu} and \cite{TPL_DAC2012_Fang}.
Since currently we cannot obtain the binary for decomposer in \cite{TPL_DAC2013_Kuang}, we directly use the results listed in \cite{TPL_DAC2013_Kuang}.
Here our decomposer is denoted as ``\textbf{SDP+PM}'', where ``PM'' means the partition based mapping.
The $\beta$ is set as 0.
In other words, SDP+PM only optimizes for stitch and conflict number.
Table \ref{tab:result1} shows the comparison in terms of runtime and performance.
For each decomposer we list its stitch number, conflict number, cost and runtime.
The columns ``cn\#" and ``st\#" denote the conflict number and the stitch number, respectively.
``cost'' is the cost function, which is set as cn\# $+ 0.1 \times$ st\#.
``CPU(s)" is computational time in seconds.

First, we compare SDP+PM with the decomposer in \cite{TPL_ICCAD2011_Yu}, which is based on SDP formulation as well.
From Table \ref{tab:result1} we can see that the new stitch candidate generation (see \cite{TPL_DAC2013_Kuang} for more details) and partition-based mapping can achieve better performance (reducing the cost by around 55\%).
Besides, SDP+PM can get nearly $4\times$ speed-up.
The reason is that, compared with \cite{TPL_ICCAD2011_Yu}, a set of speedup techniques, i.e.,
2-vertex-connected component computation, layout graph cut vertex stitch forbiddance (Sec. \ref{sec:nostitch}),
decomposition graph vertex clustering (Sec. \ref{sec:cluster}), and fast color assignment trial (Sec. \ref{sec:fastassign}),
 are proposed.
Second, we compare SDP+PM with the decomposer in \cite{TPL_DAC2012_Fang},
which applies several graph based simplifications and maximum independent set (MIS) based heuristic.
From Table \ref{tab:result1} we can see that although the decomposer in \cite{TPL_DAC2012_Fang} is
faster, MIS based heuristic has worse solution qualities (around 33\% cost penalty compared to SDP+PM).
Compared with the decomposer in \cite{TPL_DAC2013_Kuang}, although SDP+PM is slower, it can reduce the cost by around 6\%.

In addition, we compare SDP-PM with other two decomposers \cite{TPL_ICCAD2011_Yu}\cite{TPL_DAC2012_Fang} for some very dense layouts, as shown in Table \ref{tab:result2}.
We can see that for some cases the decomposer in \cite{TPL_ICCAD2011_Yu} cannot finish in 1000 seconds.
Compared with \cite{TPL_DAC2012_Fang} work, SDP+PM can reduce cost by 65\%.
It is observed that compared with other decomposers, SDP+PM demonstrates much better performance when the input layout is dense.
The reason may be that when the input layout is dense, through graph simplification, each independent problem size may still be quite large,
then SDP based approximation can achieve better results than heuristic.
It can be observed that for the last three cases our decomposer could reduce thousands of conflicts.
Each conflict may require manual layout modification or high ECO efforts, which are very time consuming.
Therefore, even our runtime is more than \cite{TPL_DAC2012_Fang}, it is still acceptable (less than 6 minutes for the largest benchmark).

\vspace{-.05in}
\subsection{Comparison for Density Balance}

\begin{table}[bt]
\centering
\renewcommand{\arraystretch}{1.0}
\caption{Balanced density impact on EPE}
\label{tab:result2}
\begin{tabular}{|c|c|c|c||c|c|c|}
  \hline  \hline
  \multirow{2}{*}{Circuit} & \multicolumn{3}{c||}{SDP+PM} & \multicolumn{3}{c|}{SDP+PM+DB} \\
  \cline{2-7}
             &cost  & CPU(s) & EPE\#      &cost  & CPU(s) & EPE\#   \\
  \hline
  C432       &0.4   &0.2    &0            &0.4   &0.2    &0    \\
  C499       &0     &0.2    &0            &0     &0.2    &0    \\
  C880       &0.7   &0.3    &10           &0.7   &0.3    &7    \\
  C1355      &0.3   &0.3    &18           &0.3   &0.3    &15   \\
  C1908      &0.1   &0.3    &130          &0.1   &0.3    &58   \\
  C2670      &0.6   &0.4    &168          &0.6   &0.4    &105  \\
  C3540      &1.8   &0.5    &164          &1.8   &0.5    &79   \\
  C5315      &0.9   &0.7    &225          &1.0   &0.7    &115  \\
  C6288      &22.3  &2.7    &31           &32.0  &2.8    &15   \\
  C7552      &2.2   &1.1    &273          &2.5   &1.1    &184  \\
  S1488      &0.2   &0.3    &72           &0.2   &0.3    &44   \\
  S38417     &24.5  &7.9    &420          &24.5  &8.5    &412  \\
  S35932     &48.8  &21.4   &1342         &49.8  &24     &1247 \\
  S38584     &48.8  &22.2   &1332         &49.1  &23.7   &1290 \\
  S15850     &44.1  &20     &1149         &47.3  &21.3   &1030 \\
  \hline
  avg.       &13.0  &5.23   &355.6        &14.0  &5.64   &306.7\\
  ratio      &1.0   &1.0    &\textbf{1.0} &1.07  &1.08   &\textbf{0.86}\\
  \hline  \hline
\end{tabular}
\vspace{-.1in}
\end{table}

In the second experiment, we test our decomposer for the density balancing.
We analyze edge placement error (EPE) using Calibre-Workbench \cite{Calibre} and industry-strength setup.
%
For analyzing the EPE in our test cases, we use systematic lithography process variation, such as focus $\pm 50$nm and dose $\pm 5\%$.
In Table \ref{tab:result2}, we compare SDP+PM with ``\textbf{SDP+PM+DB}'', which is our density balanced decomposer.
Here $\beta$ is set as 0.04 (we have tested different $\beta$ values, we found that bigger $\beta$ does not help much any more; meanwhile, we still want to give conflict and stitch higher weights).
Column ``cost'' also lists the weighted cost of conflict and stitch, i.e., cost $=$ cn\#$+ 0.1 \times$st\#.

From Table \ref{tab:result2} we can see that by integrating density balance into our decomposition flow,
our decomposer (SDP+PM+DB) can reduce EPE hotspot number by 14\%.
Besides, density balanced SDP based algorithm can maintain similar performance to the baseline SDP implementation:
only 7\% more cost of conflict and stitch, and only 8\% more runtime.
In other words, our decomposer can achieve a good density balance while keeping comparable conflicts/stitches.

\begin{table}[tb]
\centering
\renewcommand{\arraystretch}{1.0}
\caption{Additional Comparison for Density Balance}
\label{tab:denseEPE}
\begin{tabular}{|c|c|c|c||c|c|c|}
  \hline  \hline
  \multirow{2}{*}{Circuit} & \multicolumn{3}{c||}{SDP+PM} & \multicolumn{3}{c|}{SDP+PM+DB} \\
  \cline{2-7}
             &cost  & CPU(s) & EPE\#     &cost  & CPU(s) & EPE\#   \\
  \hline
 mul\_top    &145.1   &57.6   &632       &147.5     &63.8    &630      \\
 exu\_ecc    &28.4    &4.3    &140       &33.9      &4.8     &138      \\
 c9\_total   &217.9   &7.7    &60        &218.6     &8.3     &60       \\
 c10\_total  &435.6   &19     &77        &431.3     &19.6    &76       \\
 s2\_total   &1225.6  &70.7   &482       &1179.3    &75      &433      \\
 s3\_total   &2015.2  &254.5  &1563      &1937.5    &274.5   &1421     \\
 s4\_total   &2260.1  &306    &1476      &2176.3    &310     &1373     \\
 s5\_total   &2759.3  &350.4  &1270      &2673.9    &352     &1171     \\
  \hline
  avg.       &1135.9  &134    &712.5        &1099.8    &138.5   & 662.8   \\
  ratio      &1.0     &1.0    &\textbf{1.0} &0.97      &1.04    & \textbf{0.93}\\
  \hline  \hline
\end{tabular}
\vspace{-.1in}
\end{table}

We further compare the density balance, especially EPE distributions for very dense layouts.
As shown in Table \ref{tab:denseEPE}, our density balanced decomposer (SDP+PM+DB) can reduce EPE distribution number by 7\%.
Besides, for very dense layouts, density balanced SDP approximation can maintain similar performance with plain SDP implementation: only 4\% more runtime.

\vspace{-.05in}
\subsection{Scalability of SDP Formulation}

\begin{figure}[tbh]
  \vspace{-.1in}
  \centering
  \includegraphics[width=0.44\textwidth]{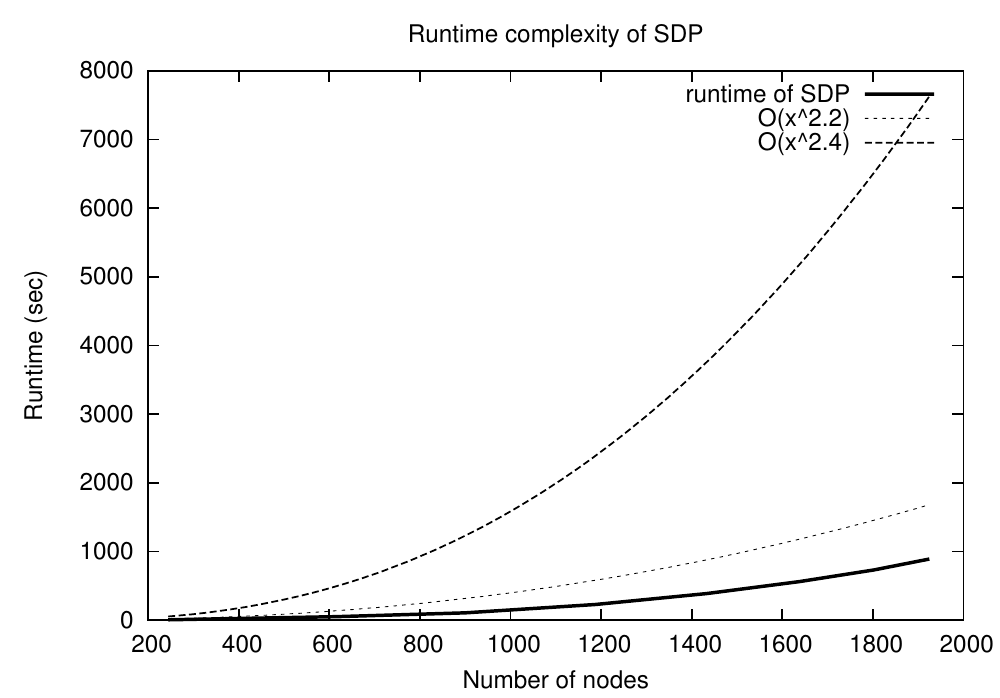}
  \caption{~Scalability of SDP Formulation.}
  \label{fig:scalability}
  \vspace{-.1in}
\end{figure}

In addition, we demonstrate the scalability of our decomposer, especially the SDP formulation.
Penrose benchmarks from \cite{TPL_SPIE08_Cork} are used to explore the scalability of SDP runtime.
No graph simplification is applied, therefore all runtime is consumed by solving SDP formulation.
Fig. \ref{fig:scalability} illustrates the relationship between graph (problem) size against SDP runtime.
Here the X axis denotes the number of nodes (e.g., the problem size), and the Y axis shows the runtime.
We can see that the runtime complexity of SDP is less than O($n^{2.2}$).

\vspace{-.1in}
\section{Conclusion}
\label{sec:conclusion}

In this paper, we propose a high performance TPL layout decomposer with balanced density.
Density balancing is integrated into all the key steps of our decomposition flow.
In addition, we propose a set of speedup techniques, such as layout graph cut vertex stitch forbiddance,
decomposition graph vertex clustering, and fast color assignment trial.
Compared with state-of-the-art frameworks, our decomposer demonstrates the best performance in minimizing the cost of conflicts and stitches.
Furthermore, our balanced decomposer can obtain less EPE while maintaining very comparable conflict  and stitch results.
As TPL may be adopted by industry for 14nm/11nm nodes, we believe more research will be needed to enable TPL-friendly design and mask synthesis.

\vspace{-.1in}
\section*{Acknowledgment}

This work is supported in part by NSF grants CCF-0644316 and CCF-1218906, SRC task 2414.001, NSFC grant 61128010, and IBM Scholarship.
\vspace{-.1in}

{
\bibliographystyle{IEEEtran}
\bibliography{./Ref/Bei,./Ref/MPL,./Ref/Algorithm,./Ref/Partition,./Ref/EBL,./Ref/Lith,./Ref/Floorplan}
}


\end{document}